\theoremstyle{definition}
\newtheorem{theorem}{Theorem}[section]
\newtheorem{remark}{Remark}[section]
\title{Subdata selection for big data regression: an improved approach}
\author{Vasilis Chasiotis}
\author{Dimitris Karlis}
\affil{\small Department of Statistics, Athens University of Economics and Business,  Greece}
\date{}
\begin{document}

\maketitle
    
\begin{abstract}
In the big data era researchers face a series of problems. Even standard approaches/methodologies, like linear regression, can be difficult or problematic with huge volumes of data. Traditional approaches for regression in big datasets may suffer due to the large sample size, since they involve inverting huge data matrices or even because the data cannot fit to the memory. Proposed approaches are based on selecting  representative subdata to run the regression. Existing approaches select the subdata using information criteria and/or properties from orthogonal arrays. 
In the present paper we improve existing algorithms providing a new algorithm that is based on D-optimality approach. We provide simulation evidence for its performance.
 Evidence about the parameters of the proposed algorithm is also provided in order to clarify the trade-offs between execution time and information gain. Real data applications are also provided.
\end{abstract}

{\em Keywords:} Experimental designs; D-optimality; Information matrix; Linear regression; Subsampling

\section{Introduction}
\label{introduction}
Recent research in various disciplines is characterized by the unprecedented demand of big data analysis. Typically, the scale of the datasets increases, so does the demand of computational resources for the statistical analysis and modeling process. Although the availability of computational power increases rapidly, it still falls far behind under the explosive increase in data volume. 

This creates new challenges to data storage and analysis. A standard approach is based on data reduction, or subsampling, where one selects a portion of the data to extract useful information. This is a crucial step in big data analysis. For massive data, subsampling techniques are popular to mitigate computational burden by reducing the data size considerably and bringing it back to a doable size.
Consider the problem of regression where the sample size $n$ is quite large and one needs to fit a model with standard least squares approach. In many circumstances, this can create a lot of computational issues, since the standard least square approach involves big matrices that perhaps do not fit in the memory. Working with less data is an option as far as the reduced dataset can keep as much information as possible. In most problems, while picking the necessary data with pure randomness is an option, improved approaches can be used to select subdata in an optimal way. 

As a first attempt, \cite{drineas2011faster} proposed the idea of selecting part of the data with a random selection. In particular, they proposed to make a randomized Hadamard transform on data and then use uniform subsampling to take random subdata to approximate ordinary least squares estimators in linear regression models. Such an approach, based on the idea and theory of random matrices, has found a lot of applications. On the other hand, they suffer from the inherent randomness of the procedure. In recent years, an alternative approach attempts to use deterministic rather than random selection of data points based on certain criteria. Such approaches share significant relationship with optimal-design problems, traditionally known in statistics for many years. Methods of optimal design of experiments might also be applicable to the setting of big data by providing the methodology for selecting data points to create the optimal subdata. 

The paper of \cite{wang2019information} is central to such approaches bringing ideas from optimal designs to the selection of data points proposing the information-based optimal subdata selection (IBOSS) approach. Recent extensions are given by \cite{wang2021oss} with the orthogonal subsampling (OSS) approach and some other ideas with other loss functions in \cite{ren&zhao}. Details about the aforementioned approaches will follow in Section \ref{back}.

A classical optimization problem is to find an ellipsoid of the smallest volume that encloses a given dataset centered around the origin, i.e., the minimum-volume enclosing ellipsoid (MVEE). The MVEE finds applications across a variety of fields, including statistics. One notable application of MVEE in statistics is in the optimal design of experiments. The approximate D-optimal design problems and the MVEE are equivalent; to be precise, they are dual problems. For more information, readers can refer to \cite{silvey1973}, \cite{titterington1975}, \cite{ahipasaoglu2015} and \cite{harman2020}. More recently, \cite{rosa2022} proposed an algorithm for calculating the MVEE, particularly useful for large datasets stored in a separate database. The algorithm developed by \cite{rosa2022} addresses scenarios where existing algorithms may encounter memory limitations or be extremely time-consuming. In \cite{ahipasaoglu2015}, it is noted that ellipsoids can be used to approximate a dataset for three reasons. One of these reasons is that ellipsoids are both simple and flexible for estimating various properties of a dataset, including the volume of its convex hull. Therefore, a plausible idea for selecting subdata, is to select data points in some sense with large convex hull as close as possible to the one generated by the full data. In such a case, the selected data points can have a large volume and hence the determinant of the information matrix will be large. Recall that the inverse of the information matrix appears in the variance of the estimated coefficient vector

To motivate the problem, consider the data in Figure \ref{mot}. We have used two covariates and $200$ data points. Suppose that we want to select $8$ observations. In the first subplot, the shaded area represents the convex hull of the full data, which is generated by the $9$ data points highlighted in red. In the remaining subplots, the shaded area represents the convex hull generated by the selected subdata, also highlighted in red.  The IBOSS approach (second panel) tries to select points at the extreme of the two covariates, whereas the OSS approach (third panel), using a different loss function, attempts to select data points at the corners of the two-dimensional space. A detailed description of the aforementioned approaches is given in Section \ref{back}. 
Our aim and contribution of the present paper, is to improve and create an approach (last panel) that balances the two ideas and with a few steps can improve a lot by selecting data points by the D-optimality criterion, namely to increase the determinant of the information matrix. In this selected motivating example, our approach is more successful than the other approaches in approximating the convex hull of the full data. Note that we present a main algorithm that starts from an already existing approach and with a few additional considerations, it improves with respect to the generalized variance of the selected subdata. We also propose some extensions which at the cost of additional execution time can further improve the D-optimality criterion under the selected data points quite a lot to balance the additional time needed. 

\begin{figure}[!thb]
\begin{center}
  \includegraphics[width=1\textwidth]{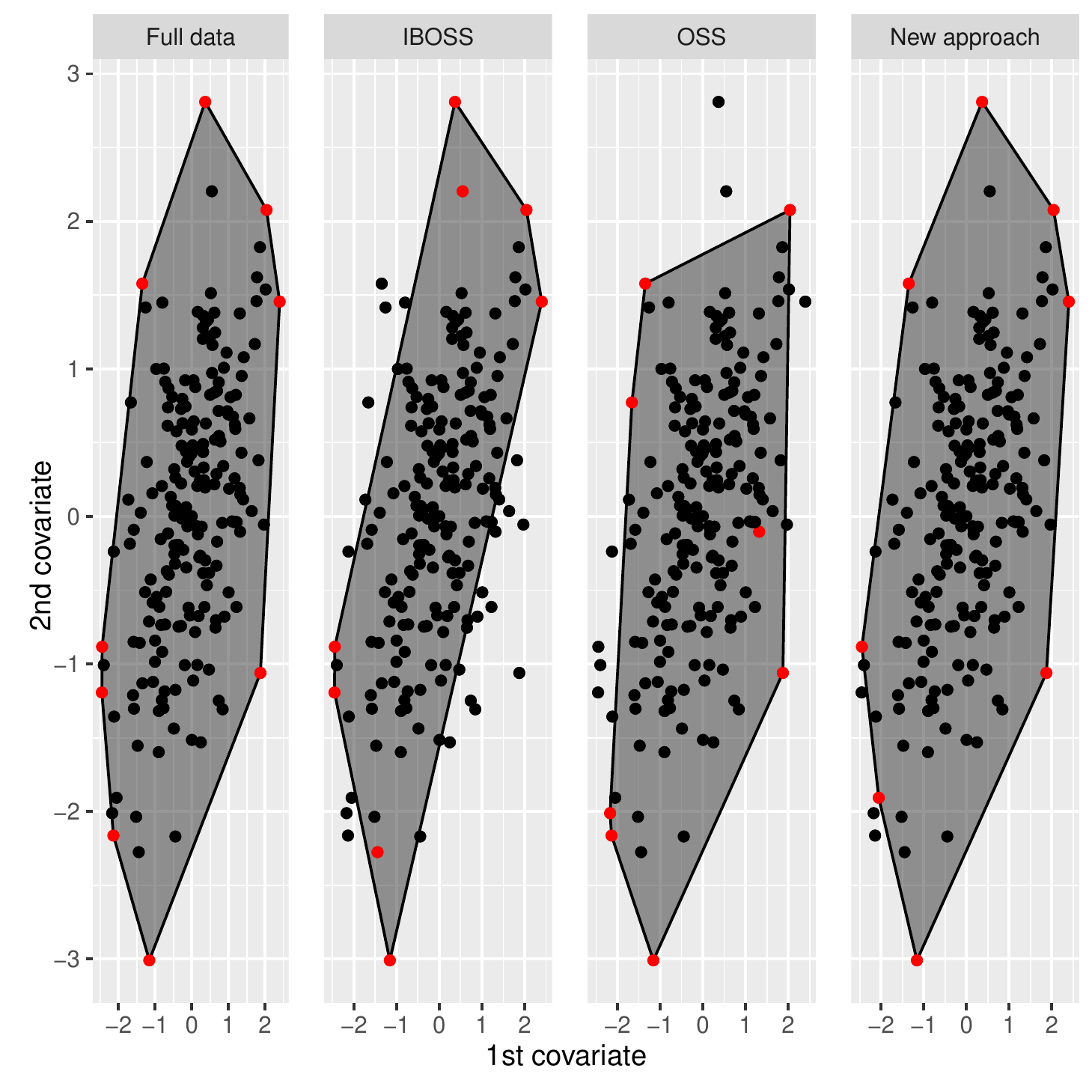}   
\end{center}
 \caption{An example for the different approaches. Data with two covariates and full data size of $200$ data points were generated. The different approaches were used to select $8$ data points that are highlighted in red. The convex hull of the full data can be seen in the first panel. In the remaining panels, one can see the convex hull generated by the selected subdata by the IBOSS approach, the OSS approach and our new approach.}
 \label{mot}
\end{figure}

The rest of the paper proceeds as follows.
Section \ref{theor} provides some theoretical arguments that will be the basis of the newly developed approach. Section \ref{back} describes approaches already existing in the current literature. The algorithm and its variants are described in detail in Section \ref{section_algorithm}. Simulation evidence to support the new approach is provided in Section \ref{section_simulation}. It includes a comparison with the existing approaches to show how and when the proposed approach improves with respect the existing ones. Two real datasets are used for illustration in Section \ref{section_data}, while concluding remarks can be found in Section \ref{concluding_remarks}.

\section{Theoretical considerations}\label{theor}

We assume that the full data are denoted by ($\textbf{x}_i, y_i$), $i=1,2,\ldots,n$. Given the following linear regression model:
\begin{equation}\label{model1}
y_i=\beta_0+\textbf{x}_i^\text{T}\boldsymbol{\beta}_1+\epsilon_i, \quad i=1,2,\ldots,n,
\end{equation}
where $y_i$ is a response, $\beta_0$ is the intercept parameter, $\textbf{x}_i=(x_{i1},x_{i2},\ldots,x_{ip})^{\text{T}}$ is a covariate vector, $\boldsymbol{\beta}_1=(\beta_1,\beta_2,\ldots,\beta_p)^{\text{T}}$ is a $p$-dimensional vector of unknown slope parameters, and $\epsilon_i$'s are the error terms that are uncorrelated satisfying $E(\epsilon_i)=0$ and $V(\epsilon_i)=\sigma^2$. 

We take into consideration the full data under model \eqref{model1}, and so the least-square estimator of $\boldsymbol{\beta}=(\beta_0,\boldsymbol{\beta}_1^{\text{T}})^{\text{T}}$, which is its best linear unbiased estimator as well, is
\begin{equation*}
\hat{\boldsymbol{\beta}}_{\text{Full}}=\left(\sum_{i=1}^{n}\textbf{z}_i\textbf{z}_i^{\text{T}}\right)^{-1}\sum_{i=1}^{n}\textbf{z}_iy_i,
\end{equation*}
where $\textbf{z}_i=(1,\textbf{x}_i^\text{T})^\text{T}$.

The covariance matrix of $\hat{\boldsymbol{\beta}}_{\text{Full}}$ is equal to the inverse of
\begin{equation*}
\textbf{Q}_{\text{Full}}=\dfrac{1}{\sigma^2}\sum_{i=1}^{n}\textbf{z}_i\textbf{z}_i^{\text{T}},
\end{equation*}
where $\textbf{Q}_{\text{Full}}$ is the observed Fisher information matrix of $\boldsymbol{\beta}$ for the full data if $\epsilon_i$ are normally distributed. Even though we do not require the normality assumption, $\textbf{Q}_{\text{Full}}$ will be still called the information matrix.

However, it is not always feasible to fully analyze the whole data, since the sample size $n$ of the full data can be too large. Thus, an approach is to gain useful information from the full data given that computational resources are limited. An effective investigation can be focused on selecting a subset of the full data.

Let $\delta_i$ be a variable that indicates whether ($\textbf{x}_i, y_i$) is included in the subdata. Therefore, $\delta_i=1$ if ($\textbf{x}_i, y_i$) is included in the subdata and $\delta_i=0$ otherwise. In case of selecting subdata of size $k$, the least-square estimator of $\boldsymbol{\beta}$ remains the best linear unbiased estimator based on the subdata, that is, 
\begin{equation*}
\hat{\boldsymbol{\beta}}_{\text{Sub}}=\left(\sum_{i=1}^{n}\delta_i\textbf{z}_i\textbf{z}_i^{\text{T}}\right)^{-1}\sum_{i=1}^{n}\delta_i\textbf{z}_iy_i,
\end{equation*}
where $\sum_{i=1}^{n}\delta_i=k$.

The information matrix with subdata of size $k$ can be written as
\begin{equation}\label{inf_sub}
\textbf{Q}_{\text{Sub}}=\dfrac{1}{\sigma^2}\sum_{i=1}^{n}\delta_i\textbf{z}_i\textbf{z}_i^{\text{T}}.
\end{equation}
The selected subdata should be optimal is some way. According to the theory of the optimal design of experiments, this is assessed with respect to a criterion, which is related to the covariance matrix of the estimated parameters. Such a popular criterion is the D-optimality criterion which seeks to maximize the determinant of the information matrix. Thus, as \cite{wang2019information} proposed, we are interested in maximizing the determinant of $\textbf{Q}_{\text{Sub}}$ subject to $\sum_{i=1}^{n}\delta_i=k$.

However, the information matrix $\textbf{Q}_{\text{Sub}}$ in \eqref{inf_sub} can be written in another way, and so the problem of maximizing its determinant can be finally redefined. The information matrix $\textbf{Q}_{\text{Sub}}$ in \eqref{inf_sub} can be written as
\begin{equation*}
\textbf{Q}_{\text{Sub}}=\dfrac{k}{\sigma^2}\begin{bmatrix}
	1 & \dfrac{\sum_{i=1}^{n}\delta_ix_{i1}}{k} & \cdots & \dfrac{\sum_{i=1}^{n}\delta_ix_{ip}}{k}\\ 
	\dfrac{\sum_{i=1}^{n}\delta_ix_{i1}}{k} & \dfrac{\sum_{i=1}^{n}\delta_ix_{i1}^2}{k} & \cdots & \dfrac{\sum_{i=1}^{n}\delta_ix_{i1}x_{ip}}{k} \\
	\vdots & \vdots & \ddots & \vdots \\
	\dfrac{\sum_{i=1}^{n}\delta_ix_{ip}}{k} & \dfrac{\sum_{i=1}^{n}\delta_ix_{i1}x_{ip}}{k} & \cdots & \dfrac{\sum_{i=1}^{n}\delta_ix_{ip}^2}{k}
\end{bmatrix}.
\end{equation*}

In order to discriminate between the covariate vectors $\textbf{x}_i$, $i=1,2,\ldots,n$ and the covariates, let $\textbf{x}_j^*$, $j=1,2,\ldots,p$ be the $j$th covariate under the selected subdata. After some calculations, we get that:
\begin{equation*}
\textbf{Q}_{\text{Sub}}=\dfrac{k}{\sigma^2}\left(\textbf{u}^{\text{T}}\textbf{u}+\textbf{\text{U}}\right), 
\end{equation*}
where $\textbf{u}=(1,\bar{x}_1^*,\bar{x}_2^*,\ldots, \bar{x}_p^*)$, $\textbf{\text{U}}=\begin{bmatrix}
0 & 0 & 0 & \cdots & 0 \\
0 & s_{\textbf{x}_1^*}^2 & s_{\textbf{x}_1^*\textbf{x}_2^*} & \cdots & s_{\textbf{x}_1^*\textbf{x}_p^*}\\
0 & s_{\textbf{x}_1^*\textbf{x}_2^*} & s_{\textbf{x}_2^*}^2 & \cdots & s_{\textbf{x}_2^*\textbf{x}_p^*}\\
\vdots &	\vdots & \vdots & \ddots & \vdots \\
0 &	 s_{\textbf{x}_1^*\textbf{x}_p^*} & s_{\textbf{x}_2^*\textbf{x}_p^*} & \cdots & s_{\textbf{x}_p^*}^2
\end{bmatrix}$, $\bar{x}_j^*=\dfrac{\sum_{i=1}^{n}\delta_ix_{ij}}{k}$, $s_{\textbf{x}_j^*}^2=\dfrac{\sum_{i=1}^{n}\delta_i(x_{ij}-\bar{x}_j^*)^2}{k}$, $j=1,2,\ldots,p$, and $s_{\textbf{x}_o^*\textbf{x}_j^*}=\dfrac{\sum_{i=1}^{n}\delta_ix_{io}x_{ij}}{k}-\bar{x}_o^*\bar{x}_j^*$, $o\ne j=1,2,\ldots,p$.\\

Since $\text{det}(\textbf{\text{U}})=0$, that is $\textbf{\text{U}}$ is not invertible, we get that: 
\begin{equation*}
\text{det}\left(\textbf{Q}_{\text{Sub}}\right)=\dfrac{k^{p+1}}{\sigma^{2(p+1)}}\left(\text{det}(\textbf{\text{U}})+\textbf{u}\text{adj}(\textbf{\text{U}})\textbf{u}^{\text{T}}\right),
\end{equation*}
where \text{adj}(\textbf{U}) is the adjugate matrix of (\textbf{U}), or
\begin{equation}\label{genvar}
\text{det}\left(\textbf{Q}_{\text{Sub}}\right)=\dfrac{k^{p+1}}{\sigma^{2(p+1)}}\text{det}\left(\begin{bmatrix}
	s_{\textbf{x}_1^*}^2 & s_{\textbf{x}_1^*\textbf{x}_2^*} & \cdots & s_{\textbf{x}_1^*\textbf{x}_p^*}\\
	s_{\textbf{x}_1^*\textbf{x}_2^*} & s_{\textbf{x}_2^*}^2 & \cdots & s_{\textbf{x}_2^*\textbf{x}_p^*}\\
	\vdots & \vdots & \ddots & \vdots \\
	 s_{\textbf{x}_1^*\textbf{x}_p^*} & s_{\textbf{x}_2^*\textbf{x}_p^*} & \cdots & s_{\textbf{x}_p^*}^2
\end{bmatrix}\right).
\end{equation}

The expression \eqref{genvar} is the generalized variance \citep{wilks1932genvar} of covariates $\textbf{x}_j^*$, $j=1,2,\ldots,p$. Therefore, the problem of maximizing the determinant of the information matrix in \eqref{inf_sub} can be addressed as a problem of maximizing the generalized variance of covariates under the selected subdata.

Let $\textbf{A}=\begin{bmatrix}
	s_{\textbf{x}_1^*}^2 & s_{\textbf{x}_1^*\textbf{x}_2^*} & \cdots & s_{\textbf{x}_1^*\textbf{x}_p^*}\\
	s_{\textbf{x}_1^*\textbf{x}_2^*} & s_{\textbf{x}_2^*}^2 & \cdots & s_{\textbf{x}_2^*\textbf{x}_p^*}\\
	\vdots & \vdots & \ddots & \vdots \\
	 s_{\textbf{x}_1^*\textbf{x}_p^*} & s_{\textbf{x}_2^*\textbf{x}_p^*} & \cdots & s_{\textbf{x}_p^*}^2
\end{bmatrix}$. Applying Cholesky decomposition to $\textbf{A}=(A_{jo})$, $j,o=1,2,\ldots,p$, we get that:
\begin{equation*}
\textbf{A}=\textbf{L}\textbf{L}^{\text{T}},
\end{equation*}
where $\textbf{L}=(L_{jo})$, $j,o=1,2,\ldots,p$ is a real lower triangular matrix such that:
\begin{equation*}
L_{jj}=\sqrt{A_{jj}-\sum_{o=1}^{j-1}L_{jo}^2} \quad \text{and} \quad L_{jo}=\dfrac{A_{jo}-\sum_{h=1}^{o-1}L_{jh}L_{oh}}{L_{oo}}, j>o.
\end{equation*}
Therefore, we get that:
\begin{equation}\label{genvar_chol}
\text{det}(\textbf{A})=\prod_{j=1}^{p}L_{jj}^2.
\end{equation}

\begin{theorem}\label{theorem}
The generalized variance of covariates under the subdata is maximized by the selection of data points for which $s_{\textbf{x}_j^*}^2$ is maximized for any $j=1,2,\ldots,p$, and $s_{\textbf{x}_o^*\textbf{x}_j^*}=0$ for any $j>o=1,2,\ldots,j-1$, simultaneously.
\end{theorem}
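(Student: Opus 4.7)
The plan is to prove this by establishing a Hadamard-type inequality for $\det(\mathbf{A})$ using the Cholesky factorization already introduced in equation \eqref{genvar_chol}, and then identifying the equality case.

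First I would start from \eqref{genvar_chol}, namely $\det(\mathbf{A})=\prod_{j=1}^{p}L_{jj}^{2}$, and use the explicit recursion
\begin{equation*}
L_{jj}^{2}=A_{jj}-\sum_{o=1}^{j-1}L_{jo}^{2}.
\end{equation*}
Since the sum on the right is a sum of non-negative real numbers, this immediately yields $L_{jj}^{2}\le A_{jj}=s_{\mathbf{x}_{j}^{*}}^{2}$ for every $j=1,2,\ldots,p$. Multiplying these inequalities across $j$ gives the bound
\begin{equation*}
\det(\mathbf{A})=\prod_{j=1}^{p}L_{jj}^{2}\;\le\;\prod_{j=1}^{p}s_{\mathbf{x}_{j}^{*}}^{2},
\end{equation*}
which is the Hadamard inequality applied to the positive semidefinite matrix $\mathbf{A}$.

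Next I would characterise the equality case. Equality in the chain above requires $L_{jo}=0$ for all $j>o$. Inspecting the Cholesky recursion $L_{jo}=\bigl(A_{jo}-\sum_{h=1}^{o-1}L_{jh}L_{oh}\bigr)/L_{oo}$ inductively (starting from $j=2$, $o=1$, where $L_{21}=A_{21}/L_{11}$), one sees that $L_{jo}=0$ for all $j>o$ is equivalent to $A_{jo}=0$ for all $j>o$, i.e.\ $s_{\mathbf{x}_{o}^{*}\mathbf{x}_{j}^{*}}=0$ for every pair $o<j$. When this holds, $\mathbf{A}$ is diagonal and $\det(\mathbf{A})=\prod_{j=1}^{p}s_{\mathbf{x}_{j}^{*}}^{2}$ exactly.

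Combining the two observations, maximising $\det(\mathbf{A})$ is equivalent to maximising the upper bound $\prod_{j=1}^{p}s_{\mathbf{x}_{j}^{*}}^{2}$ and simultaneously achieving it. The former is done by selecting subdata that maximise each $s_{\mathbf{x}_{j}^{*}}^{2}$; the latter requires the simultaneous vanishing of all covariances $s_{\mathbf{x}_{o}^{*}\mathbf{x}_{j}^{*}}$. This yields exactly the two conditions stated in the theorem. The main obstacle I anticipate is purely expository rather than technical: the theorem is phrased as a joint optimality condition, so care must be taken to make explicit that the bound $\prod_{j}s_{\mathbf{x}_{j}^{*}}^{2}$ is an upper bound regardless of the correlations, while the ``simultaneously'' clause is precisely the condition under which this upper bound is attained, thereby justifying the two-pronged selection rule that will motivate the algorithm in Section~\ref{section_algorithm}.
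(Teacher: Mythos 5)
Your proposal is correct and follows essentially the same route as the paper: both arguments rest on the Cholesky factorization in \eqref{genvar_chol}, the observation that $L_{jj}^2=A_{jj}-\sum_{o=1}^{j-1}L_{jo}^2\le A_{jj}$, and the inductive equivalence of $L_{jo}=0$ for all $j>o$ with $A_{jo}=0$ for all $j>o$. Your packaging of this as Hadamard's inequality together with its equality case is a slightly cleaner way of stating the same argument, but it is not a different proof.
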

\begin{proof}
According to \eqref{genvar_chol}, the generalized variance of covariates under the subdata is maximized when $L_{jj}^2$, or $A_{jj}-\sum_{o=1}^{j-1}L_{jo}^2$, are maximized for any $j=1,2,\ldots,p$. Therefore, the maximization of $A_{jj}$ for any $j=1,2,\ldots,p$ and $L_{jo}^2=0$, or $L_{jo}=0$, for any $o=1,2,\ldots,j-1$ are required, simultaneously.

For $j=1$ we get that $L_{11}^2=A_{11}$, and so the maximization of $s_{\textbf{x}_1^*}^2$ is derived. For $j=2$ we get that $L_{21}=A_{21}/L_{11}$, and so $A_{21}=0$ is required. Therefore, the maximization of $s_{\textbf{x}_2^*}^2$ and $s_{\textbf{x}_1^*\textbf{x}_2^*}=0$ are derived.

For any $j=3,4,\ldots,p$ we get that $L_{j1}=A_{j1}/L_{11}$ and so $A_{j1}=0$ is required. Also, for any $j=3,4,\ldots,p$ and for any $o=2,3,\ldots,j-1$ we get that $\sum_{h=1}^{o-1}L_{jh}L_{oh}=0$, since both $L_{jh}=0$ and $L_{oh}=0$ for any $h=1,2,\ldots,o-1$, and so $A_{jo}=0$ is required. Therefore, the maximization of $s_{\textbf{x}_j^*}^2$ and $s_{\textbf{x}_o^*\textbf{x}_j^*}=0$ for any $j=3,4,\ldots,p$ and for any $o=1,2,\ldots,j-1$ are derived.
\end{proof}
According to Theorem \ref{theorem}, the determinant of the information matrix in \eqref{inf_sub} is maximized selecting data points for which $s_{\textbf{x}_j^*}^2$'s are maximized  for any $j=1,2,\ldots,p$, and $s_{\textbf{x}_o^*\textbf{x}_j^*}=0$ for any $j>o=1,2,\ldots,j-1$, simultaneously. Even though such a case may not be feasible, in order to maximize the determinant of the information matrix in \eqref{inf_sub}, one should be interested in collecting data points for which $s_{\textbf{x}_j^*}^2$, $j=1,2,\ldots,p$ are getting as maximum as possible and $s_{\textbf{x}_o^*\textbf{x}_j^*}$, $j>o=1,2,\ldots,j-1$ tends to zero, simultaneously.

At first glance, it seems that the aforementioned theoretical result is already known in the current literature. However, such a statement is valid just in case $\textbf{x}_i$'s in model \eqref{model1} are factors not covariates, since it is well-known in the theory of experimental designs, that to prove the D-optimality of a design, the off-diagonal elements of the corresponding information matrix should be equal to zero, given that the sum of the diagonal elements is fixed. If the $\textbf{x}_i$'s in model \eqref{model1} are covariates, the sum of the diagonal elements of the information matrix is not fixed under any subdata, since even the change of only a data point leads to different values for all elements of the information matrix. As a result, working with covariates, one should take into consideration all the elements of the information matrix, and so we are interested in selecting data points for which $s_{\textbf{x}_j^*}^2$'s are maximized (as much as possible) for any $j=1,2,\ldots,p$, and $s_{\textbf{x}_o^*\textbf{x}_j^*}=0$ for any $j>o=1,2,\ldots,j-1$, simultaneously. Also, it is not obvious that the D-optimal subdata are those for which $s_{\textbf{x}_o^*\textbf{x}_j^*}=0$ for any $j>o=1,2,\ldots,j-1$. An option to maximize the determinant under the subdata, i.e., the generalized variance, is the selection of another data point that could lead to $s_{\textbf{x}_o^*\textbf{x}_j^*}=0$ for any $j>o=1,2,\ldots,j-1$ as in the previous subdata, but maximizing at least one of the previous $s_{\textbf{x}_j^*}^2$, $j=1,2,\ldots,p$, simultaneously. Another option is the selection of another data point that could not necessarily lead to $s_{\textbf{x}_o^*\textbf{x}_j^*}=0$ for any $j>o=1,2,\ldots,j-1$ as in the previous subdata, but maximizing at least one of the previous $s_{\textbf{x}_j^*}^2$, $j=1,2,\ldots,p$ to such an extent, that the new obtained determinant could be bigger than the one under the previous selected subdata.

\section{Existing approaches}\label{back}

We describe briefly existing approaches upon which we want to improve. 
\subsection{The IBOSS algorithm}\label{section_iboss}
\cite{wang2019information} proposed the IBOSS approach in order to select subdata. The scope of their approach is to obtain, given the size of the subdata, data points that provide most of the information contained in the full data. Their idea was motivated by the concept of optimal experimental designs, which aims at organizing, conducting, and interpreting results of experiments in an efficient manner in order to obtain as much useful information as possible, given a budget. More specifically, their idea was to apply the ``maximization'' of an information matrix, that takes place in optimal experimental designs, to recognize data points that provide most of the information contained in the full data. Therefore, they concluded that an optimal estimator of the unknown parameters of a linear regression model, based on the subdata, can be obtained by ``maximizing'' the inverse of the covariance matrix of the unknown parameters. They were interested in maximizing a matrix, and so the choice of an optimality criterion function of the matrix was needed. They preferred the popular D-optimality criterion that maximizes the determinant of the inverse of the covariance matrix of the unknown parameters given the subdata.

Furthermore, \cite{wang2019information} developed an algorithm that was based on their information-based optimal subdata selection framework. Their algorithm actually was motivated by their result for an upper bound of the determinant of the inverse of the covariance matrix of the unknown parameters given the subdata. The approach is based on collecting subdata with extreme covariate values, both small and large, occurring with the same frequency. Therefore, the algorithm of the IBOSS approach selects data points with the smallest as well as largest values of all covariates sequentially, given that previous selected data points are excluded. The time complexity of the algorithm of the IBOSS approach is $O(np+kp^2)$.

The IBOSS algorithm  outperforms the uniform and the leverage-based subsampling approaches in selecting informative subdata from big data. Also, the IBOSS algorithm compares favorably to the leverage-based subsampling approach in terms of the execution time, both being more efficient than modeling on the full data.

The basic approach in \cite{wang2019information} has found several extensions to other cases like
\cite{cheng2020information} for logistic regression, \cite{yao2019optimal} for multinomial logistic regression, \cite{wang2021optimal} for quantile regression. Algorithmic extensions can be found in
\cite{wang2019divide} and \cite{lee2021fast}.
For further work on the topic see \cite{yao2021review} and \cite{deldossi2021optimal}.

\subsection{The OSS algorithm}
\cite{wang2021oss}, motivated by \cite{wang2019information}, proposed a method to select subdata, with a focus on linear regression models, based on an orthogonal array (OA). Their purpose was to obtain subdata that approach an OA. The approach is driven by the fact that a two-level OA represents an optimal design for linear regression, since it minimizes the average variance of the estimated parameters as well as provides the best predictions \citep{dey&mukerjee}. An OA represents an optimal design because of its combinatorial orthogonality, and so the OSS approach selects data points with maximum combinatorial orthogonality. 

 Based on the combinatorial orthogonality of an OA, \cite{wang2021oss} developed a sequential addition algorithm that selects data points from the full data focusing on the maximization of the subdata's orthogonality. Also, some data points are deleted in each step of the algorithm in order to reduce the number of candidate data points, and so to speed up the algorithm. All covariates are scaled to [$-1$, $1$]. 
The OSS approach, and so the developed algorithm, is based on a discrepancy function that was defined by \cite{wang2021oss} in order to measure the distortion of data points on keeping two features simultaneously. The two features are connected with the optimality of orthogonal arrays. The first feature is that data points are located at the corners of the data domain $[-1, 1]^p$ and have large distances from the center. Thus, extreme data points provide more information about the model. The second feature is the combinatorial orthogonality, that is data points (their signs) are as dissimilar as possible. The computational complexity of the algorithm of the OSS approach is $O(np\text{log}k)$.

A design is A-optimal when the trace of the inverse of its information matrix is minimized in the class of all designs. \cite{wang2021oss} evaluated their approach by calculating, among others, the D- and A-efficiencies of the selected subdata, since a two-level OA represents a D-, A-optimal design for linear regression. As in \cite{wang2021oss}, the D- and A-efficiencies of the subdata can be calculated using:
$$
\text{D}_{eff}=\dfrac{\text{det}(\textbf{Q}_{\text{Sub}})^{1/(p+1)}}{k}
$$
and
$$
\text{A}_{eff}=\dfrac{p+1}{k\sum_{j=1}^{p+1}\lambda_j(\textbf{Q}_{\text{Sub}}^{-1})},
$$
respectively, where $\lambda_j(\textbf{Q}_{\text{Sub}}^{-1})$ denotes the $j$th eigenvalue of $\textbf{Q}_{\text{Sub}}^{-1}$.

The D-efficiency of the subdata can be interpreted as a measure of how close the selected subdata are to the best possible subdata, for which the determinant of the corresponding information matrix is maximized. Similarly, the A-efficiency of the subdata quantifies how close the selected subdata are to the best possible subdata for which the trace of the inverse of their corresponding information matrix is minimized. For example, subdata with 100\% D-efficiency indicates that the subdata are as efficient as the D-optimal ones. Therefore, subdata with higher D- and A-efficiencies are preferable.

The OSS approach outperforms the uniform subsampling and the IBOSS approach in selecting informative subdata from big data. Also, the OSS approach is faster than the IBOSS one, and they are both faster than modeling on the full data. It is important to mention that the IBOSS approach selects data points with only the first aforementioned feature without taking into any consideration of the second feature. Therefore, the consideration of the second feature is an important improvement of the OSS approach compared with the IBOSS one.

 Theorem \ref{theorem} shows that the perfect case is achieved under orthogonality taking the extreme points. The OSS algorithm, since the covariates are
 continuous, cannot achieve this.  Since it is too difficult in practice to achieve zero covariance between the covariates, Theorem \ref{theorem}  implies that a good strategy in practice towards a better subdata selection is to select observations that maximize the generalized variance.

\subsection{The approach of \cite{ren&zhao}}
\cite{ren&zhao}, motivated by \cite{wang2021oss}, defined
a new discrepancy function to evaluate subdata, based on a two-dimensional projection property of orthogonal arrays, that the approach of \cite{wang2021oss} did not take into account. Also, they developed three algorithms, that are suitable for datasets with a small number of covariates, in order to select subdata according to the distribution of covariates as well as the prior information concerning the importance ordering of covariates.

According to the OSS approach, the dissimilar data points are selected
from all covariates of the full data. The basic idea of the approach of \cite{ren&zhao} is to select dissimilar data points from any two covariates of the full data, so as the selected subdata to approach a $k\times p$ two-level OA of strength 2, OA($k$, $p$, $2$, $2$), that is a $k\times p$ matrix whose columns have the property that in every pair of columns each of the possible ordered pairs of elements appears the same number of times. Therefore, the approach of \cite{ren&zhao} attempts to improve both $D_{eff}$ and $A_{eff}$ of the selected subdata taking into consideration a projection property of an OA($k$, $p$, $2$, $2$), that is the combinations ($1$, $1$), ($1$, $-1$), ($-1$, $1$), ($-1$, $-1$) appear $k/4$ times in any two of its columns. Note that an OA($k$, $p$, $2$, $2$) is D-, A-optimal in the class of all two-level fractional factorial designs for linear regressions \citep{dey&mukerjee}.

In the approach of \cite{ren&zhao}, following the OSS approach, all covariates are scaled to [$-1$, $1$]. According to the aforementioned two-dimensional projection property of an OA($k$, $p$, $2$, $2$), the four different sign combinations appear as equal as possible, ideally $k/4$ times each, in any two covariates of the selected subdata. Therefore, the new discrepancy function, defined by \cite{ren&zhao}, was based on this feature. 

The developed algorithms in the approach of \cite{ren&zhao} both select and eliminate data points in order to be as fast as possible. \cite{ren&zhao}, after conducting some simulated experiments in order to evaluate their developed algorithms, they concluded that their first algorithm is suitable in the case that there is prior information
concerning the importance ordering of covariates regardless their distribution. Their Algorithm 2 is recommended if the distribution of covariates is a heavy-tailed distribution, whereas their Algorithm 3 is recommended if the distribution of covariates is a normal distribution.

Since in Section \ref{section_simulation} we work on covariates that have a multivariate normal distribution, we provide some concerns about Algorithm 3 in the approach of \cite{ren&zhao}. Algorithm 3, in Step 3, either deletes data points from the subdata adding them into a candidate set of data points, or deletes data points from a candidate set of data points adding them into the subdata. The case that Algorithm 3 deletes data points from the subdata adding them into a candidate set of data points works perfectly, since Algorithm 3 is always able to maintain a number of data points to the subdata. The remaining data points are deleted and they are added into a candidate set of data points. The case that Algorithm 3 deletes data points from a candidate set of data points adding them into the subdata is not always implementable, since data points that are going to be added into the subdata may not exist in the candidate set of data points in terms of quantity. This issue leads to the non-implementation of Algorithm 3. Since \cite{ren&zhao} do not provide any advice in this situation, we do not compare the approach of \cite{ren&zhao} with our approach. However, we need to mention that the approach of \cite{ren&zhao} is in the right direction, since it takes into consideration the two-dimensional projection property of an OA($k$, $p$, $2$, $2$).

According to \cite{ren&zhao}, Algorithm 3 outperforms both the IBOSS and the OSS approaches in selecting informative subdata from big data. The time complexity of Algorithm 3 of the approach of \cite{ren&zhao} is $\text{max}\left(O(np),O(ftp^2)\right)$, where $f$ is the $f$th iteration of Algorithm 3 and $t$ is the number of data points of the candidate set.

\section{The new proposed algorithms}\label{section_algorithm}
\subsection{The Alg1}
We develop an algorithm with a primary focus on improving the algorithm of an already existing approach, such that a reselection of subdata leads to an increase of the generalized variance. 

The input of our algorithm is the subdata, say set $\textbf{S}=(\textbf{s}_{i}), i=1,2,\ldots,k$,  obtained by some approach, i.e., the OSS or the IBOSS approach, and so the implementation of our algorithm requires the implementation of the corresponding algorithm first. Our goal is to identify and interchange selected data points by the corresponding algorithm  with those that were not selected, say set $\textbf{D}=(\textbf{d}_{r\cdot}), r=1,2,\ldots,n-k$. The criterion that allows the aforementioned interchange is the increase of the generalized variance, denoted as $V$. At first, we need to mention that we do not take into consideration all data points that were not selected by the corresponding algorithm (set \textbf{D}) as candidate data points of the final subdata. Therefore, a set of candidate points should be obtained. Such a set, denoted by \textbf{F}, is a subset of data points among the ones that have not been already selected by the corresponding algorithm, that is $\textbf{F}\subset\textbf{D}$. In order to provide a more comprehensive explanation about \textbf{F}, we need to clarify how \textbf{F} is obtained. Data points in \textbf{F} are selected in a manner similar to how the algorithm of the IBOSS approach selects data points, with a notable difference. We reiterate (see Section \ref{section_iboss}) that the algorithm of the IBOSS approach selects data points with the smallest as well as largest values of all covariates sequentially, given that previous selected data points are excluded. So, in order to obtain \textbf{F}, we select data points from \textbf{D} with the smallest as well as largest values of all covariates sequentially, but previous selected data points are not excluded. This is the difference in the selection of data points between the algorithm of the IBOSS approach and the construction of \textbf{F}. A consequence of such a construction of \textbf{F} is that a data point could be selected more than once. This is a circumstance that may occur when a data point is selected to belong to \textbf{F} due to its values in multiple covariates. To clarify and without loss of generality, a data point in \textbf{D} can be selected twice for the construction of \textbf{F}, if its value in the first covariate is among the smallest ones and its value in the second covariate is among the largest ones. This situation may arise because the selected data point is not excluded from \textbf{D} when we select data points based on the values of the first covariate. Therefore, the already selected data point still remains in \textbf{D}, and so it can be selected again when we select data points based on the values of the second covariate to obtain \textbf{F}. Thus, such a method of selecting data points can lead to at least duplicated ones in \textbf{F}, and so only one of them is kept, that is we keep only unique data points in \textbf{F}. Also, we are not able to know the final size of the data points in \textbf{F}, say $N_{\text{F}}$, since it is not feasible to know in advance the existence of at least duplicated data points. However, the maximum final size of the data points in \textbf{F} is equal to $Kp$, where $K$ is an even number of data points selected for each covariate. Note that the value of $K$ is user-selected.   

We will provide some explanations about the second step of Alg1, in order to clarify its functionality. The selection of data points from \textbf{D} with the smallest as well as largest values of all covariates sequentially, requires to sort the values for each of the $p$ covariates in ascending (or descending) order. The notation for this procedure in Alg1 is $\text{sort}(\textbf{d}_{\cdot j})$, $j=1,2,\ldots,p$. Also, if only one covariate is sorted within \textbf{D}, without sorting all the remaining covariates based on the selected covariate that has been sorted, then the resulting \textbf{D} will consist of different data points. Therefore, all the remaining covariates in \textbf{D} should be sorted according to the covariate that has been sorted. The notation for this procedure in Alg1 is sort(\textbf{D}).

\begin{algorithm}
	\caption{Alg1}
	\begin{algorithmic}
		\Require subdata $\textbf{S}=(\textbf{s}_{i}), i=1,2,\ldots,k$  of some approach, full data $\textbf{D}_{\text{Full}}$, subdata size $k$, candidate data points $K$ from each covariate
		\Ensure new obtained subdata \textbf{S}
		\State \textbf{Step 1: Preparation}
		\State $V=\text{det}\left(\textbf{Q}_{\text{Sub}}\right)$ \Comment{generalized variance of $\textbf{S}$}
		\State $\textbf{D}=\textbf{D}_{\text{Full}}-\textbf{S}=(\textbf{d}_{r\cdot})$ \Comment{remaining data points $\textbf{d}_{r\cdot}=(d_{r1},\ldots,d_{rp})\not\in \textbf{S}$}
		\State $N_{\text{D}}=\text{nrow}(\textbf{D})$ \Comment{number of data points $\textbf{d}_{r\cdot}\in \textbf{D}$}
		\State $\textbf{F}=\O$ \Comment{initialize the index set of candidate data points}
		\State \textbf{Step 2: Find candidate data points}
		\For{$j$ in $1,\ldots,p$}
		\State $\textbf{d}_{\cdot j}=\text{sort}(\textbf{d}_{\cdot j})$ \Comment{sort $\textbf{d}_{\cdot j}=(d_{1j},\ldots,d_{N_{\text{D}}j})$}
		\State $\textbf{D}=\text{sort}(\textbf{D})$ \Comment{sort \textbf{D} based on $\textbf{d}_{\cdot j}$}
		\State $\textbf{F}=\textbf{F}\cup\{\textbf{d}_{1\cdot}\}\cup\cdots\cup\{\textbf{d}_{K/2\cdot}\}$
		\State $\textbf{F}=\textbf{F}\cup\{\textbf{d}_{N_{\text{D}}-K/2+1\cdot}\}\cup\cdots\cup \{\textbf{d}_{N_{\text{D}}\cdot}\}$
		\EndFor
		\State $\textbf{F}=\text{unique}(\textbf{F})$ \Comment{keep unique data points of $\textbf{F}=(\textbf{f}_w)$}
		\State $N_{\text{F}}=\text{nrow}(\textbf{F})$ \Comment{number of data points $\textbf{f}_{w}\in \textbf{F}$}
		\State \textbf{Step 3: Main algorithm}
		\For{$i$ in $1,\ldots,k$}
		\For{$w$ in $1,\ldots,N_{\text{F}}$}
		\State $\textbf{s}_i\leftrightarrow\textbf{f}_w$ \Comment{interchange data points $\textbf{s}_i$ and $\textbf{f}_w$}
		\State $V_{\text{new}}=\text{det}\left(\textbf{Q}_{\text{Sub}}\right)$ \Comment{generalized variance of new \textbf{S}} 
		\If{$V_\text{new}>V$}
		\State $V=V_\text{new}$
		\State \textbf{break}
		\Else
		\State $\textbf{s}_i\leftrightarrow\textbf{f}_w$
		\EndIf
		\EndFor
		\EndFor
		\State \textbf{return} \textbf{S}
	\end{algorithmic}
\end{algorithm}

\begin{remark}\label{rem_alg}
In Alg1, the data points $\textbf{f}_w$, $w=1,2,\ldots,N_{\text{F}}$ that are interchanged with the data points $\textbf{s}_i$, $i=1,2,\ldots,k$, are selected in the order in which $\textbf{F}=(\textbf{f}_w)$ is constructed in Step 2.
\end{remark}

A consequence of Remark \ref{rem_alg} is that the interchange of a data point $\textbf{s}_i$ with a later data point of $\textbf{F}$ could possibly lead to a greater generalized variance $V_{\text{new}}$. However, it is not feasible to determine if such a case can occur, since Alg1 interchanges a data point $\textbf{s}_i$ with the first data point that locates in $\textbf{F}$ given that $V_\text{new}>V$.

The time complexity to construct $\textbf{F}$ is $O((n-k)p)$. The procedure of interchanging data points between $\textbf{S}$ and $\textbf{F}$ has time complexity $O(kN_{\text{F}})$. Thus, the time complexity of Alg1 is $O(np\text{log}k+(n-k)p+kN_{\text{F}})$.  In the remaining of this paper, in the cases that Alg1 is executed with more than one iteration, only Step 3 is repeated. 

\subsection{The VAlg1}
The aforementioned consequence of Remark \ref{rem_alg} proposes a variation on Alg1, and so a new algorithm (VAlg1) seeks to improve the generalized variance of the final subdata. VAlg1 makes a change as to which data point of $\textbf{F}$ is chosen to be interchanged with a data point of $\textbf{S}$. Therefore, a data point $\textbf{s}_i$ is interchanged with that data point of $\textbf{F}$, among all remaining ones, for which $V_\text{new}$ is the largest possible. The interchanging between data points in VAlg1 is applied to all data points of $\textbf{S}$.

\begin{algorithm}[H]
	\caption{VAlg1}
	\begin{algorithmic}
		\State \textbf{Steps 1 and 2: Same as in Alg1}
        \State \textbf{Step 3: Main algorithm}
		\For{$i$ in $1,\ldots,k$}
		\For{$w$ in $1,\ldots,N_{\text{F}}$}
		\State $\textbf{s}_i\leftrightarrow\textbf{f}_w$ \Comment{interchange data points $\textbf{s}_i$ and $\textbf{f}_w$}
		\State $V_{\text{new}}=\text{det}\left(\textbf{Q}_{\text{Sub}}\right)$ \Comment{generalized variance of new \textbf{S}} 
		\If{$V_\text{new}>V$}
		\State $V=V_\text{new}$
		\Else
		\State $\textbf{s}_i\leftrightarrow\textbf{f}_w$
		\EndIf
		\EndFor
		\EndFor
		\State \textbf{return} \textbf{S}
	\end{algorithmic}
\end{algorithm}

VAlg1 has the same time complexity as Alg1. More details about the execution times of Alg1 as well as VAlg1 are provided in Section \ref{section_simulation}.

Note that both Alg1 and VAlg1 are, in fact, an  attempt to implement what Theorem \ref{theorem} indicates. Attempting to increase the variance of  covariates under the subdata we collect as candidate data points, data points close to the extreme ones, i.e., data points that are considered as more probable to lead us close  to orthogonality.

\subsection{Dealing with outliers}\label{deal_outliers}
In big data analytics, addressing the presence of outliers is an important challenge that should be carefully considered. The detection and handling of outliers need particular attention as both the complexity and volume of big data increase. 

In \cite{wang_outliers} the problems posed as well as the solutions proposed for outlier detection on big data are examined. More recently, \cite{deldossi2023} considered the presence of outliers in big data with the aim of selecting subdata for precise parameter estimation and accurate predictions in case of a linear regression model. Their motivation was that D-optimal support points are located on the boundary of the design space, and so a subdata selection based on the D-optimality criterion may result in the inclusion of outliers.

\cite{wang2019information} noted that the selected subdata based on the algorithm of the IBOSS approach may include outliers. They suggested that outliers in the subdata could be identified using outlier diagnostic methods, as the subdata follow the same underlying regression model as the full data. They also mentioned that if a data point is an outlier in the full data, then this data point will be identified as an outlier in the subdata as well. Moreover, they asserted that data points that are far from others should be used for parameter estimation, if they follow the underlying model.

\section{Simulation experiments}\label{section_simulation}
In this section, we use simulated data in order to evaluate the performance of both Alg1 and VAlg1. Moreover, we present the results of the algorithms of the approaches of IBOSS and OSS, for comparing them with our approach.

\subsection{Evaluation of algorithms}\label{simulation}
In this simulation experiment, observations $\textbf{x}_i$'s follow a multivariate normal distribution, that is, $\textbf{x}_i\sim N(\textbf{0},\mathbf{\Sigma})$, where $\mathbf{\Sigma}=\left(\Sigma_{ij}\right)$, $i,j=1,2,\ldots,p$ is a covariance matrix. Also, $\Sigma_{ij}=1$ for $i=j=1,2,\ldots,p$ and $\Sigma_{ij}=0.5$ for $i\ne j=1,2,\ldots,p$.
The response data are generated from the linear model in \eqref{model1} with the true value of $\boldsymbol{\beta}$ being a $11$ dimensional vector with all elements equal to $1$ and $\sigma^2=3$. An intercept is included, so $p=10$.

The simulation is repeated $1,000$ times. We calculate the D-, A-efficiencies and the mean squared error (MSE) of the subdata selected by our approach as well as the approaches of IBOSS and OSS. As shown in \cite{wang2019information} and \cite{wang2021oss}, we estimate the intercept with the adjusted estimator $\hat{\beta}_0=\bar{y}-\bar{\textbf{x}}^{\text{T}}\hat{\boldsymbol{\beta}}_1^{Sub}$, where $\bar{y}$ is the mean of the response full data, $\bar{\textbf{x}}$ is the vector of means of all covariates in the full data, and $\hat{\boldsymbol{\beta}}_1^{Sub}$ is the ordinary least squares estimate of $\boldsymbol{\beta}_1^{Sub}$ based on the subdata. Therefore, we consider $(\hat{\beta}_0^{(r)}-\beta_0)^2$ and $||\hat{\boldsymbol{\beta}}_1^{(r)}-\boldsymbol{\beta}_1||^2$ the MSE for intercept and slope estimators in the $r$th repetition, where $\hat{\beta}_0^{(r)}$ and $\hat{\boldsymbol{\beta}}_1^{(r)}$ are $\hat{\beta}_0$ and $\hat{\boldsymbol{\beta}}_1^{Sub}$ in the $r$th repetition.

We are interested in investigating several cases when the full data sizes are $n=5\times10^3, 10^4$ and $10^5$, and the subdata size is fixed at $k=100$. Also, the maximum final size of the candidate data points is fixed at $Kp=250$, that is $K=25$. Alg1 is executed with $5$ iterations, and VAlg1 is executed once. The choice of the number of iterations for Alg1 and VAlg1 is explained in Sections \ref{iterations} and \ref{time}, respectively. 

Figure \ref{fig1} shows the MSEs, D-efficiency, and A-efficiency for the subdata selected by different approaches. The mean values ($\blacklozenge$) are also provided.
\begin{figure}[!thb]
\begin{center}
\includegraphics[width=1\textwidth]{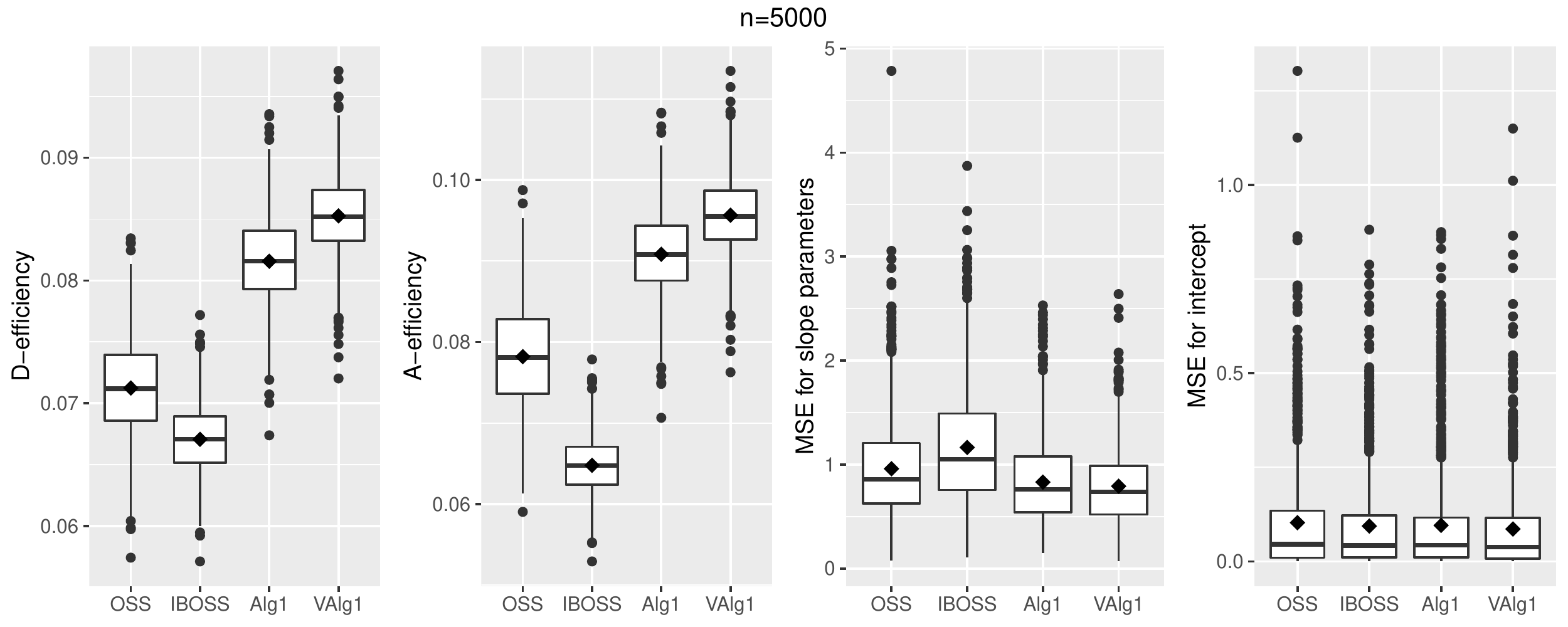}\par
\end{center}
\begin{center}
\includegraphics[width=1\textwidth]{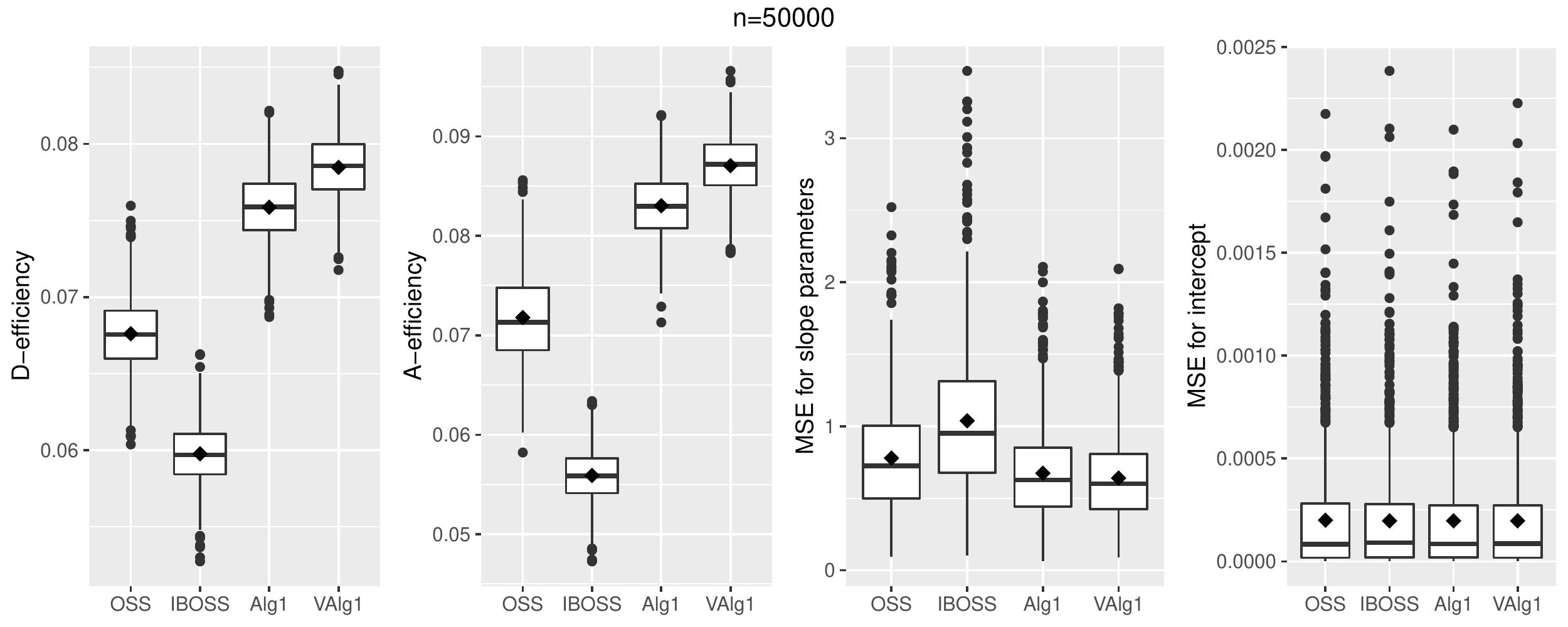}\par
\end{center}
\begin{center}
\includegraphics[width=1\textwidth]{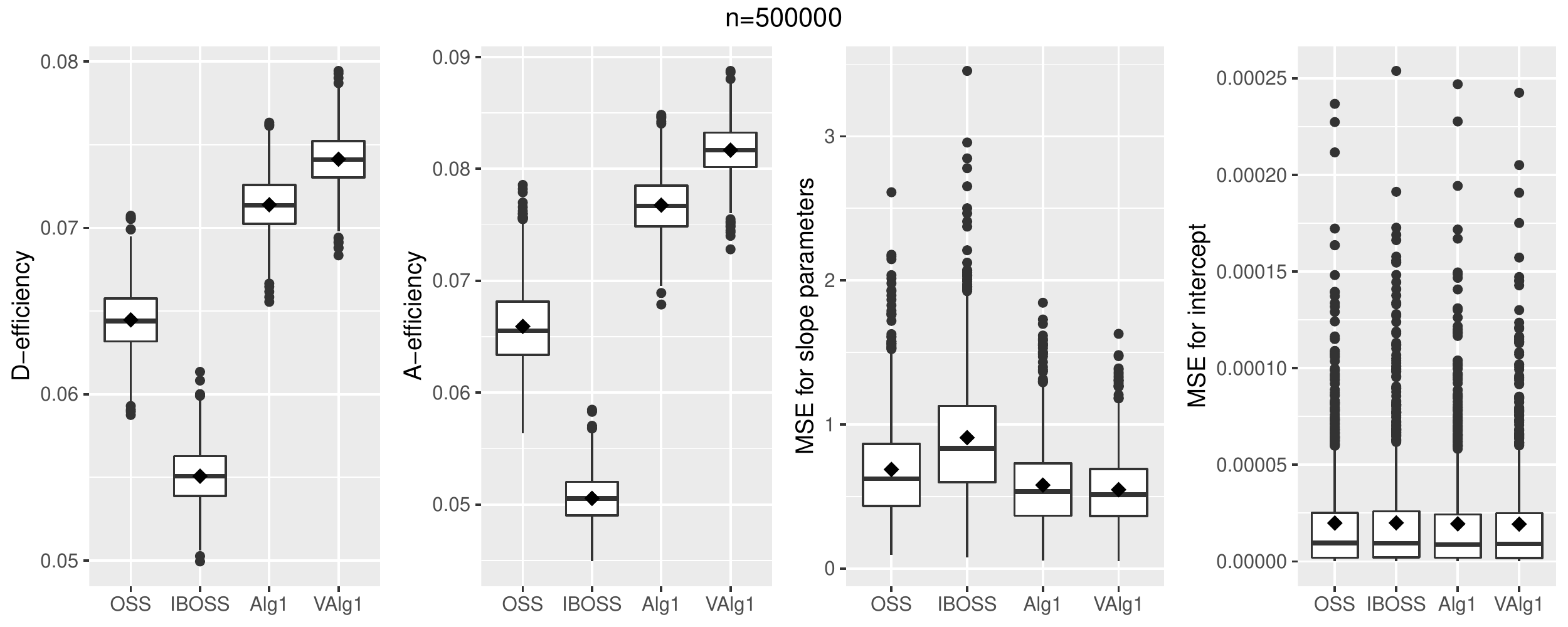}\par
\end{center}
\caption{The MSEs, D- and A-efficiencies for the subdata selected by different approaches, when the full data size is $n=5\times10^3, 10^4$ and $10^5$, the subdata size is $k=100$, and the number of candidate data points is $K=25$. Alg1 is executed with $5$ iterations, and VAlg1 is executed once.}
\label{fig1}
\end{figure}

Both Alg1 and VAlg1 outperform the IBOSS and OSS algorithms in each of the D-, A-optimality criteria, and provide more accurate estimates for the model parameters as one can see in Figure \ref{fig1}.  The important finding is the magnitude of progress in the D-, A-optimality criteria rather the absolute improvement which is sure since we start from the OSS algorithm and we proceed by improving the generalized variance.
As in the OSS approach, MSE for slope parameters from our approach decreases as the full data size $n$ increases, even though the subdata size is fixed at $k=100$, and so this fact indicates that our approach identifies more informative data points from the full data when the full data size increases. MSE for intercept is immutable among the three approaches. Also, VAlg1 can be considered to be more effective than Alg1. Such a result can find justification in the interchanging of data points between $\textbf{S}$ and $\textbf{F}$. One could argue that our approach is more effective when each data point of $\textbf{S}$ is interchanged with a data point of $\textbf{F}$ for which $V_{\text{new}}$ is the largest possible. However, Alg1 may be more effective than VAlg1 under other circumstances. Such a result could be achieved either executing Alg1 with more than $5$ iterations or applying it under another set of $n$, $p$, $k$ and $K$.

It is worth mentioning that our approach performs well under the A-optimality criterion, even though both Alg1 and VAlg1 are developed based on the increasing of the determinant of the information matrix of the subdata. Therefore, our approach could provide some insights on how to obtain subdata that approach an OA. 

\subsection{\label{time} Execution time of algorithms}
In this experiment, observations $\textbf{x}_i\sim N(\textbf{0},\mathbf{\Sigma})$, where $\mathbf{\Sigma}=\left(\Sigma_{ij}\right)$, $i,j=1,2,\ldots,p$ such that $\Sigma_{ij}=1$ for $i=j=1,2,\ldots,p$ and $\Sigma_{ij}=0.5$ for $i\ne j=1,2,\ldots,p$. The simulation is repeated $500$ times with full data size $n=10^3$. We focus on the execution times of both Alg1 and VAlg1 under various cases. All
computations, for this section but also for Section \ref{section_data} are carried out on a PC with 3.6 GHz Intel 8-Core I7
processor and 16GB memory. We fix $p=7$ covariates. Also, we combine the following two features:
\begin{enumerate}
 \item The subdata size $k$ is equal to $28$, $42$ and $56$.
 \item The value of $K$ is equal to $20$, $40$ and $60$.
\end{enumerate} 
At first, we are interested in investigating the execution time of Alg1 as a result of the number of its iterations. Alg1 is executed once as well as with $3$, $5$, $7$, $10$, $12$, $15$, $18$ and $20$ iterations. In Figure \ref{fig2}, we present the mean execution time (in seconds) of Alg1. It is easily perceived that the mean execution time of Alg1 is getting slower as the number of $k$ and $K$ increases. However, an almost linear increase of the mean execution time on the number of iterations suggests that Alg1 is worth implementing.

\begin{figure}[!thb]
\begin{center}
\includegraphics[width=1\textwidth]{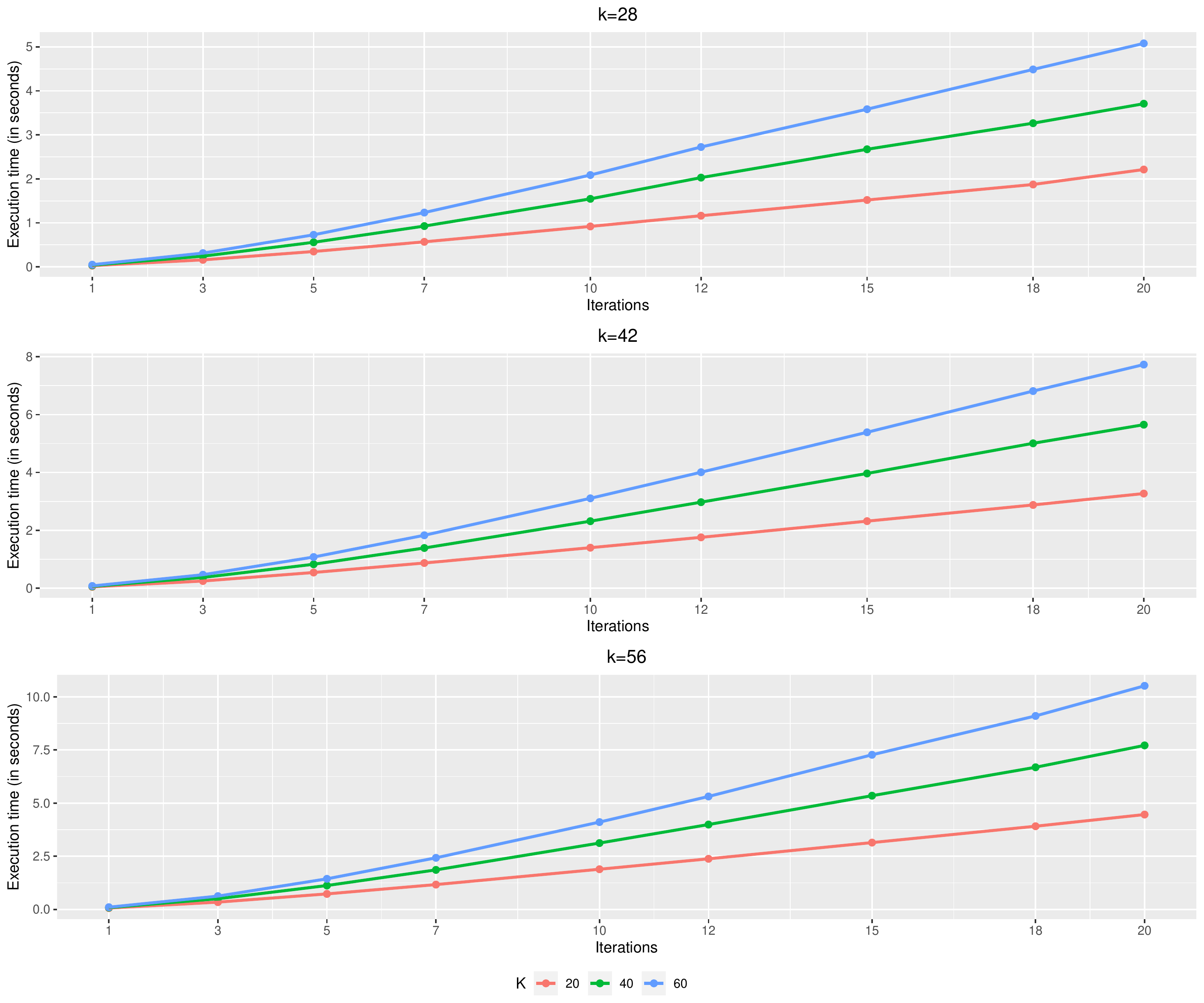}
\end{center}
\caption{The mean execution time (in seconds) of Alg1 in the cases of $1$, $3$, $5$, $7$, $10$, $12$, $15$, $18$ and $20$ iteration(s), when the subdata size is $k=28$, $42$, $56$, and the number of candidate data points is $K=20$, $40$, $60$.}
\label{fig2}
\end{figure}

In Table \ref{table1}, we present the mean execution time (in seconds) of VAlg1. VAlg1 is executed once, since it searches among all data points of $\textbf{F}$. The implementation of VAlg1 with more than one execution may result to a time-consuming algorithm without any improved results. As in Alg1, the mean execution time of VAlg1 is getting slower as the number of $k$ and $K$ increases.

\begin{table}[!htb]
\caption{The mean execution time (in seconds) of VAlg1 which is executed once, when  the subdata size is $k=28$, $42$, $56$, and  the number of candidate data points is  $K=20$, $40$, $60$.}
\label{table1}
\begin{center}
\begin{tabular}{cccccccccc}
\toprule
 $k$ & \multicolumn{3}{c}{28} & \multicolumn{3}{c}{42} & \multicolumn{3}{c}{56} \\ \midrule
 $K$ & 20 & 40 & 60 & 20 & 40 & 60 & 20 & 40 & 60 \\ \midrule
 \begin{tabular}{@{}c@{}}Mean execution \\ time (in seconds) \end{tabular} & 
0.106 & 0.214 & 0.307 & 0.197 & 0.317 & 0.434 & 0.230 & 0.451 & 0.600 \\ \bottomrule
\end{tabular}
\end{center}
\end{table}

 Both Alg1 and VAlg1 start from the OSS approach. Even though the execution times of both Alg1 and VAlg1 are added to the execution time of the OSS approach in order to obtain an optimal subdata, the results in Figure \ref{fig1} and Table \ref{table1} indicate that our approach is worth implementing. We need to mention that the number of $n$, $k$, $K$ as well as the number of iterations for Alg1 have been selected in order to provide just an illustration of our approach.
 
\subsection{\label{iterations} About iterations of Alg1}
 As we have mentioned, Alg1 can be executed for a number of iterations. This number can be large if we wish at the cost of additional computing time.  As one can imagine, the few starting iterations will lead to a large increase of the generalized variance which will start to slow down as the number of iterations increases.  Figure \ref{figiter} shows the relative mean percent increase in the generalized variance by Alg1 for a given number of iterations,  given that Alg1 starts from the OSS approach. The configuration is exactly the same as in Section \ref{time}. One can see that one iteration provides a huge increase in the generalized variance. The increase in the generalized variance is quite large up to $5$ iterations, and then it slows down. Taking into account the time performance in Section \ref{time}, we believe that a choice of $5$ iterations makes sense since it is compromise between a large increase and a doable computing time. Of course, in practice and if there are not significant time restrictions one may use more iterations. Figure \ref{figiter} shows that typically after 12 iterations of Alg1, the generalized variance improves quite a little. In conclusion, we suggest Alg1 to be executed with $5$ iterations, and VAlg1 (see Section \ref{time}) to be executed once.
  
 \begin{figure}[!thb]
\begin{center}
\includegraphics[width=1\textwidth]{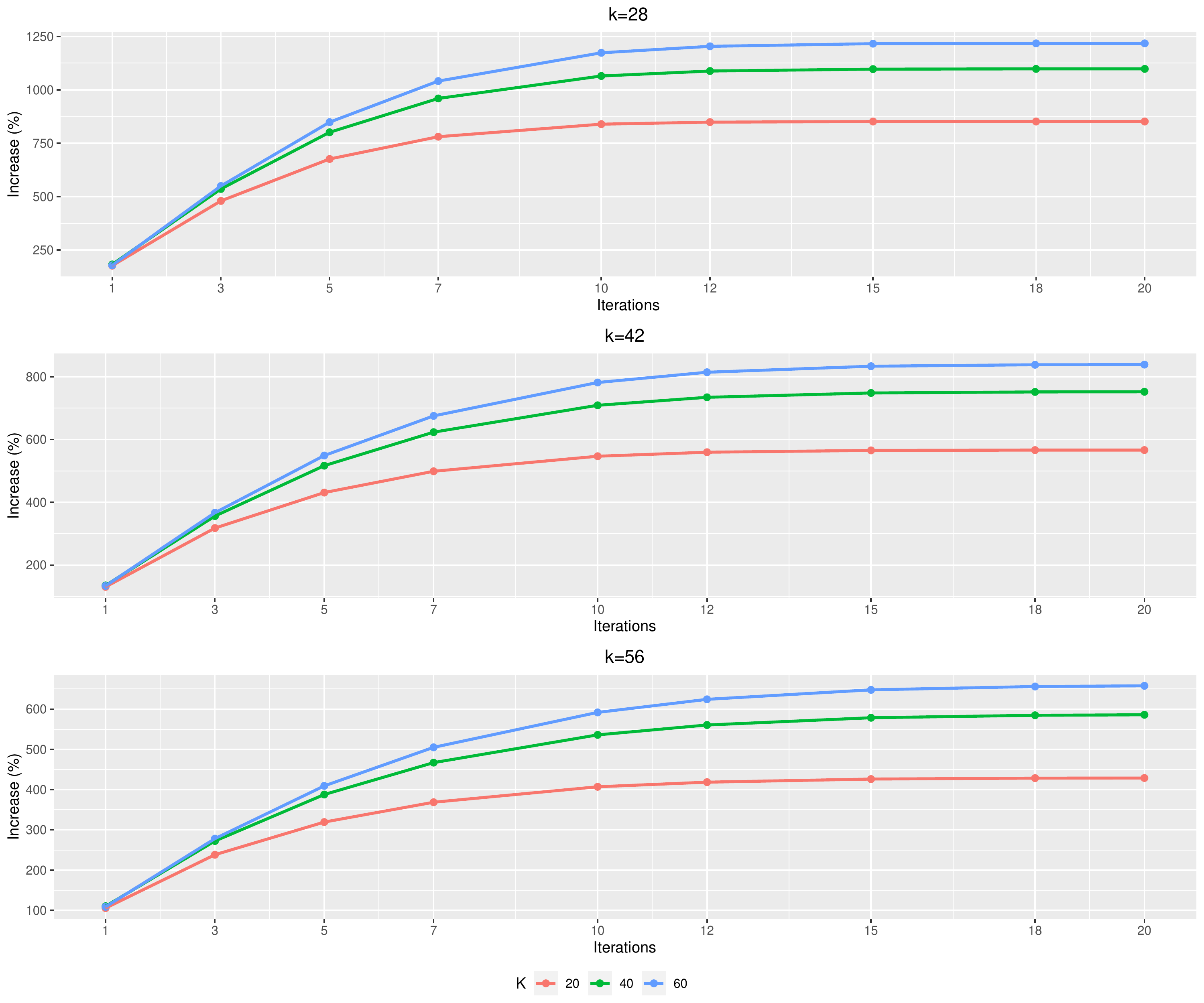}
\end{center}
\caption{The mean percent increase in the generalized variance by Alg1 in the cases of $1$, $3$, $5$, $7$, $10$, $12$, $15$, $18$ and $20$ iteration(s), when the subdata size is  $k=28$, $42$, $56$, and the number of candidate data points is $K=20$, $40$, $60$.}
\label{figiter}
\end{figure}

\subsection{\label{simulation_outliers} Evaluation of algorithms in presence of outliers}
In this simulation experiment, the full data size is $n=500,000$. The first $499,950$ observations $\textbf{x}_i\sim N(\textbf{0},\mathbf{\Sigma})$, and the remaining $50$ observations $\textbf{x}_i\sim N(\boldsymbol{\mu},\mathbf{\Sigma})$ with two scenarios:
 \begin{enumerate}
     \item $\boldsymbol{\mu}=\left(5,\textbf{0}_{p-1}\right)$
     \item $\boldsymbol{\mu}=\left(7,\textbf{0}_{p-1}\right)$ 
 \end{enumerate}
 The vector $\textbf{0}_{p-1}$ is a $(p-1)$-dimensional vector consisting of zeros. Also, $\mathbf{\Sigma}=\left(\Sigma_{ij}\right)$, $i,j=1,2,\ldots,p$ such that $\Sigma_{ij}=1$ for $i=j=1,2,\ldots,p$ and $\Sigma_{ij}=0.5$ for $i\ne j=1,2,\ldots,p$. The response data are generated from the linear model in \eqref{model1} with the true value of $\boldsymbol{\beta}$ being a $11$ dimensional vector with all elements equal to $1$ and $\sigma^2=3$. An intercept is included, so $p=10$.

The simulation is repeated $1000$ times. We calculate the MSE of the subdata selected by our approach as well as the approaches of IBOSS and OSS as in Section \ref{simulation}. The subdata size is fixed at $k=100$. Also, the maximum final size of the candidate data points is fixed at $Kp=250$, that is $K=25$. Alg1 is executed with $5$ iterations, and VAlg1 is executed once.

Figure \ref{fig1} shows the MSEs, D-efficiency, and A-efficiency for the subdata selected by different approaches. The mean values ($\blacklozenge$) are also provided.
\begin{figure}[!thb]
\begin{center}
\includegraphics[width=1\textwidth]{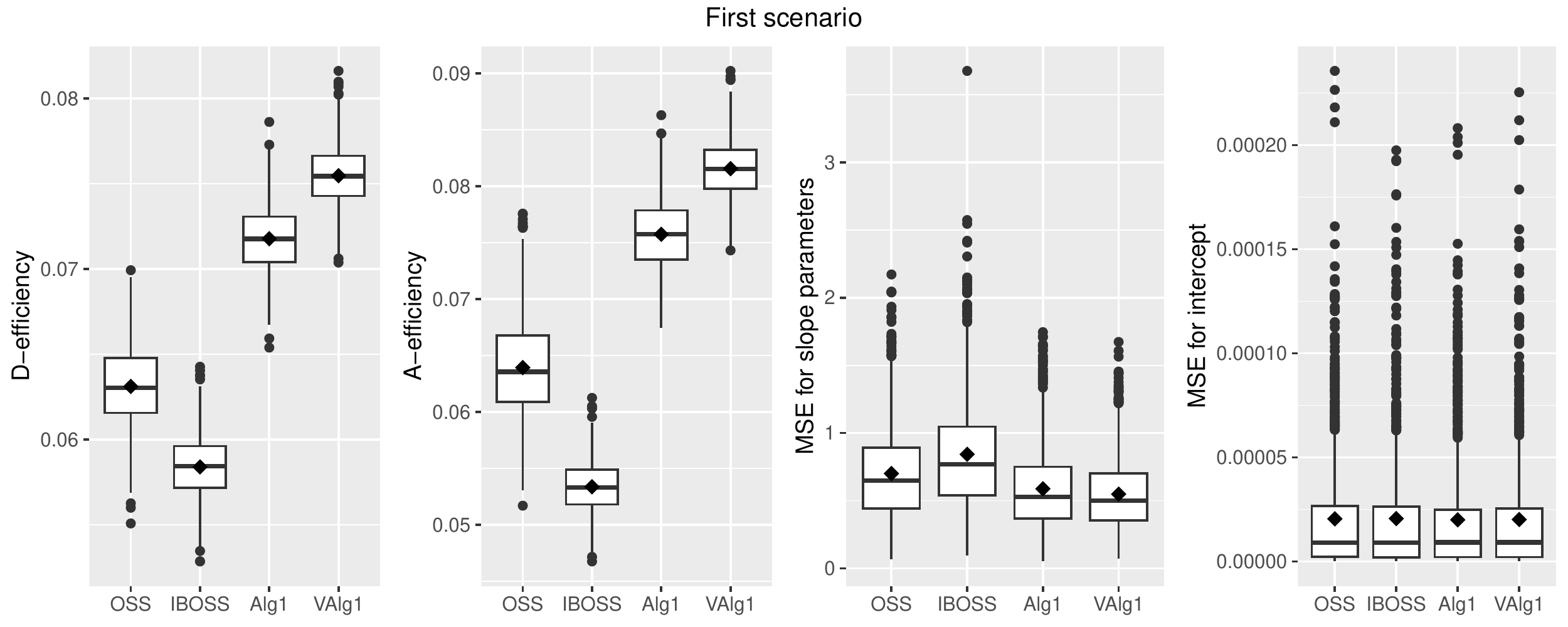}\par
\end{center}
\begin{center}
\includegraphics[width=1\textwidth]{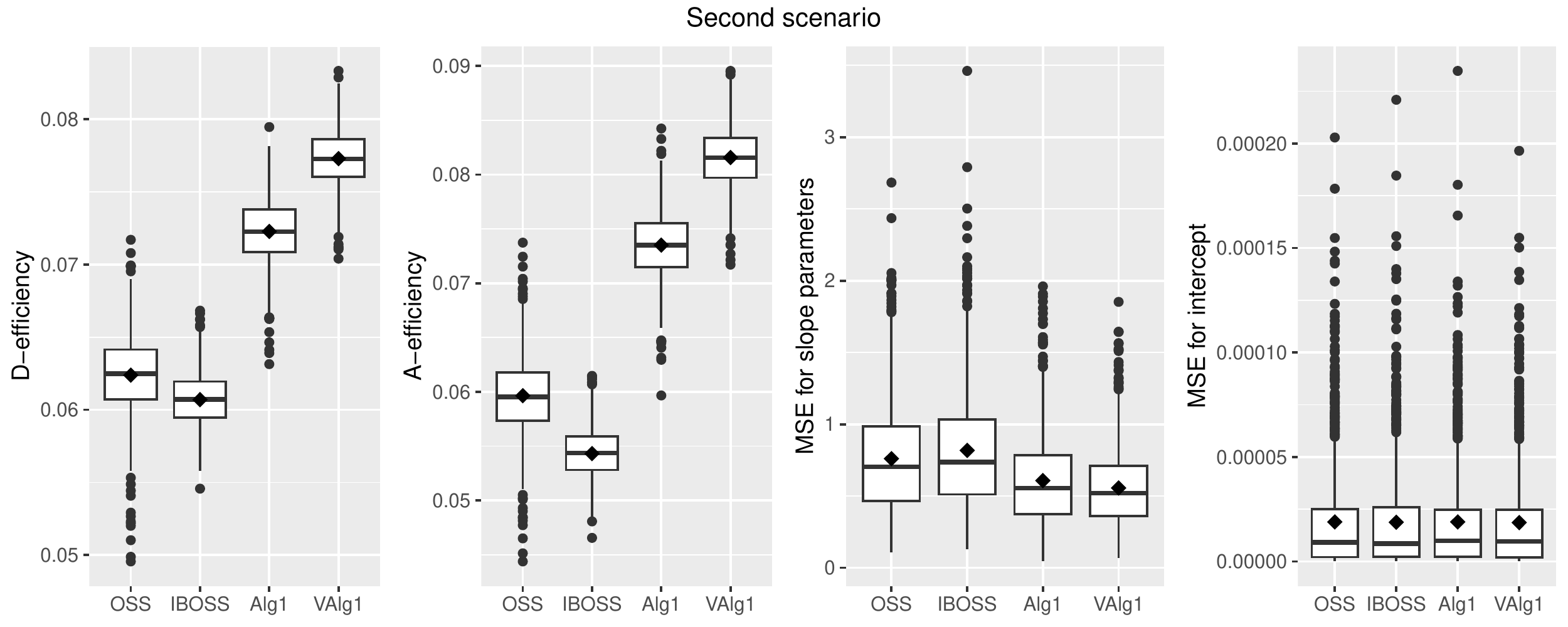}\par
\end{center}
\caption{The MSEs, D- and A-efficiencies for the subdata selected by different approaches, when the full data size is $n=5\times10^5$ in the presence of outliers, the subdata size is $k=100$, and the number of candidate data points is $K=25$. Alg1 is executed with $5$ iterations, and VAlg1 is executed once.}
\label{fig_outliers}
\end{figure}

The results are similar with the ones in Figure \ref{fig1}. In the presence of outliers, both Alg1 and VAlg1 outperform the IBOSS and OSS algorithms in each of the D-, A-optimality criteria, and provide more accurate estimates for the model parameters as one can see in Figure \ref{fig_outliers}.  The important finding is the magnitude of progress in the D-, A-optimality criteria rather the absolute improvement which is sure since we start from the OSS algorithm and we proceed by improving the generalized variance.

\section{Real data applications}
\label{section_data}

We evaluate the performance of our approach on three real data examples. In the first example, the accuracy of the ordinary least squares estimates of slope parameters in model \eqref{model1} is examined, based on the subdata obtained by our approach. In the second example, we examine the trade-offs between execution time and information gain in terms of D-efficiency and A-efficiency, based on the subdata obtained by our approach. In the last example, which was used in \cite{wang2019information}, we
examine the convex hull generated by the subdata of our approach between some selected covariates.

\subsection{Protein tertiary structure}
The dataset of the first example is related to physicochemical properties of protein tertiary structure. The dataset is taken from CASP 5-9. The full data contains $n=45, 730$ data points and we have selected eight measurements: total surface area, non polar exposed area, fractional area of exposed non polar part of residue, molecular mass weighted exposed area, average deviation from standard exposed area of residue, Euclidean distance, secondary structure penalty and spacial distribution constraints, so the number of covariates in the model is $p=8$. The response variable is the size of the residue. Further information about the dataset can be found in ``UCI Machine Learning Repository'' \cite{dua2019}.

We are interested in comparing the performance of our algorithms with the algorithms of the approaches of IBOSS and OSS, and so we consider the MSE for the vector of slope parameters for each algorithm by using $100$ bootstrap samples, as in \cite{wang2019information} and \cite{wang2021oss}. Each bootstrap sample is a random sample of size $n$ from the full data using sampling with replacement. For a bootstrap sample, we implement each algorithm in order to obtain the subdata and then from the selected subdata we estimate the parameters of the model. 

We consider $k=6p, 10p, 16p, 32p$, and $K=20$. Alg1 is executed with $5$ iterations, and VAlg1 is executed once. Both Alg1 and VAlg1 start from the IBOSS approach, since the latter outperforms the OSS approach. Figure \ref{protein} shows the bootstrap MSEs by different approaches. Logarithm with base 10 is taken of each MSE for better presentation of the figures. The mean values ($\blacklozenge$) are also provided. Our approach, that is both Alg1 and VAlg1, outperforms the  IBOSS and OSS algorithms in minimizing the bootstrap MSEs. We notice that Alg1 and VAlg1 have a similar behavior, but VAlg1 being slightly better in this case. Also note that the improvement we get from our algorithms is somehow the same even when the subdata size is getting bigger.

\begin{figure}[!thb]
\begin{center}
\includegraphics[width=1\textwidth]{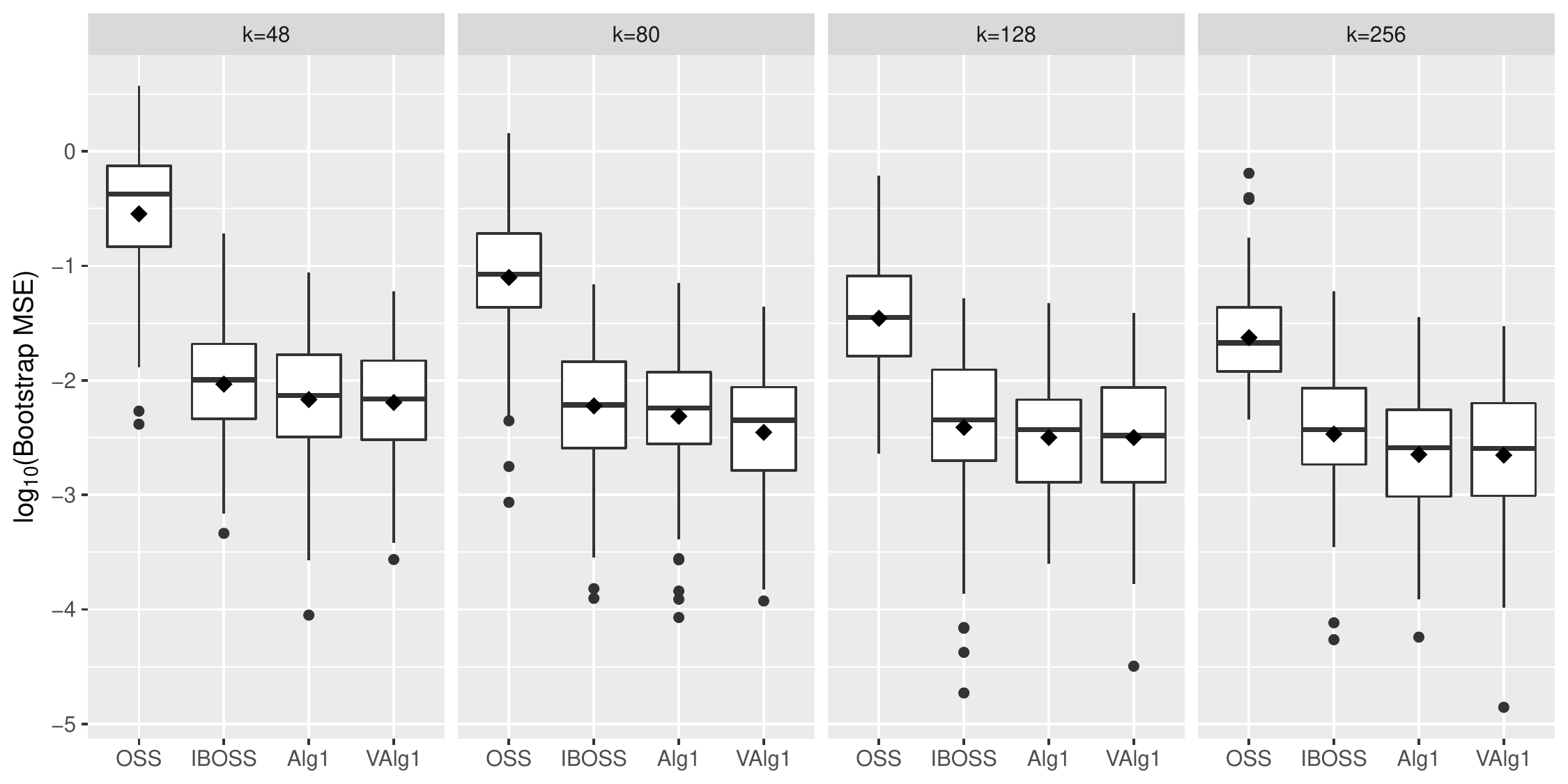}
\end{center}
\caption{The bootstrap MSEs for the subdata selected by different approaches, when the subdata size is $k=6p, 10p, 16p, 32p$, and the number of candidate data points is $K=20$. Alg1 is executed with $5$ iterations, and VAlg1 is executed once.}
\label{protein}
\end{figure}

\subsection{US domestic flights}
The current dataset is related to US domestic flights from 1990 to 2009. The full data contains $n=3, 606, 803$ data points and we have selected five measurements: the number of seats available on flights from origin to destination, the number of flights between origin and destination, the distance flown between origin and destination, origin city's population as reported by US Census, and destination city's population as reported by US Census, so the number of covariates is $p=5$. The response variable is the number of passengers transported from origin to destination. Further information about the dataset can be found in \url{https://www.kaggle.com/datasets/flashgordon/usa-airport-dataset}.

We are interested in comparing the performance of our algorithms with the algorithms of the approaches of IBOSS and OSS, in terms of both D-efficiency and A-efficiency. Also, we focus on the execution times of our algorithms and the approaches of IBOSS and OSS, in order to clarify the trade-offs between information gain and execution time. 

We consider $k=12p, 18p, 28p, 56p$, and $K=20$. Alg1 is executed with $5$ iterations, and VAlg1 is executed once. Both Alg1 and VAlg1 start from the IBOSS approach, since the latter outperforms the OSS approach. Figure \ref{flights} shows the D-efficiency, the A-efficiency and the execution time (in seconds) by different approaches. It is useful to note that with rather negligible additional time we can improve substantially the A-, D-efficiencies with respect to the IBOSS algorithm which is the starting point here. We need to mention that the execution times of both Alg1 and VAlg1 are added to the execution time of the IBOSS algorithm. Also, note that for this dataset the new algorithms have overall time needed close to the one of OSS approach, while the improvement in efficiency is quite large.

\begin{figure}[!thb]
\begin{center}
 \includegraphics[width=1\textwidth]{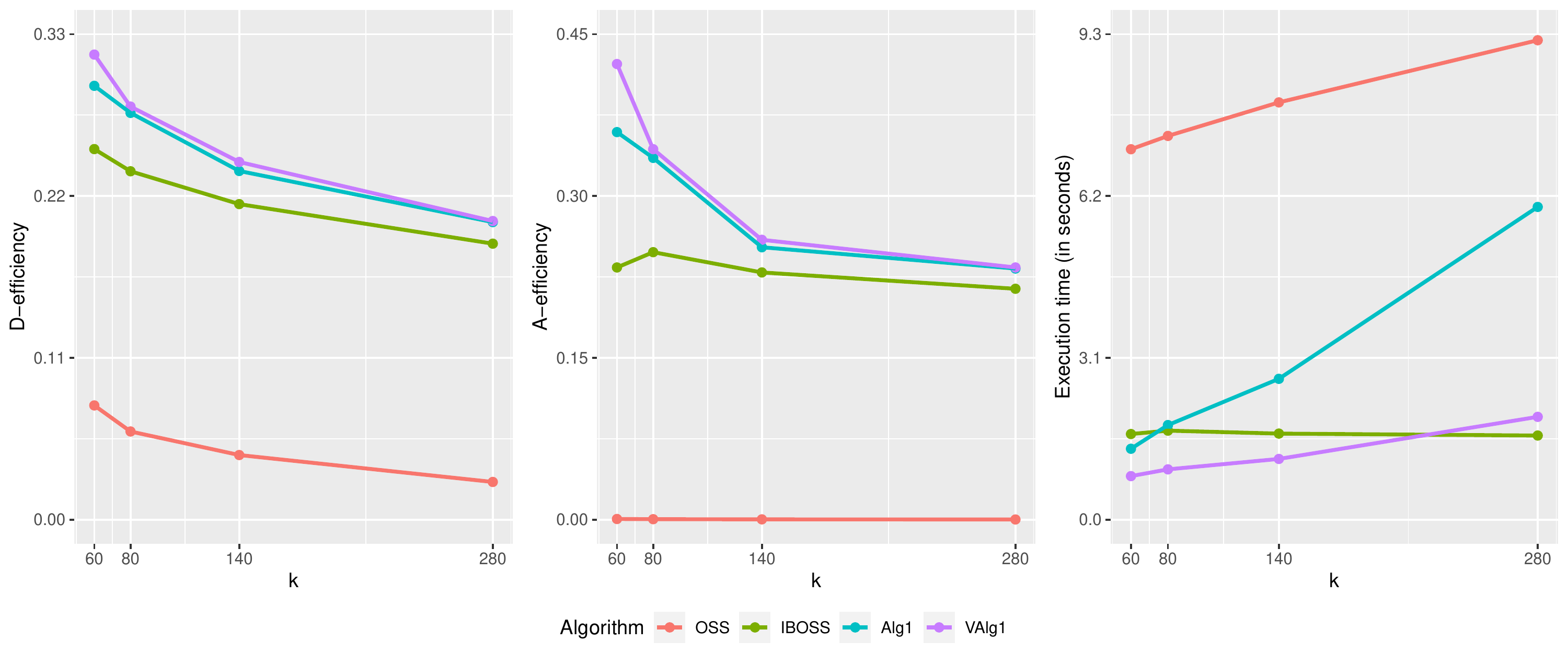}
\end{center}
 \caption{D-efficiency, A-efficiency and execution time (in seconds) for the subdata selected by different approaches, when the subdata size is $k=12p, 18p, 28p, 56p$, the number of candidate data points is $K=20$, Alg1 is executed with $5$ iterations and VAlg1 is executed once.}
 \label{flights}
\end{figure}

\subsection{Chemical sensors data}
In this example, we take into consideration chemical sensors data that were collected in order to develop algorithms for continuously monitoring or improving response time of sensory systems \citep{fonollosa2015}. The data were collected at the ChemoSignals Laboratory
in the BioCircuits Institute, University of California San Diego. The dataset contains the readings of 16 chemical sensors exposed to the mixture of Ethylene and CO at varying concentrations in air. The 16 chemical sensors are of four different types, that is TGS-2600, TGS-2602, TGS-2610, TGS-2620. To be more precise, the order of the 16 sensors is TGS-2602, TGS-2602, TGS-2600, TGS-2600, TGS-2610, TGS-2610, TGS-2620, TGS-2620, TGS-2602, TGS-2602, TGS-2600, TGS2-600, TGS-2610, TGS-2610, TGS-2620 and TGS-2620. Further information about the dataset can be found in \cite{fonollosa2015}.

For illustration, we followed \cite{wang2019information} who used this example, and so we use the readings from the first 15 sensors as covariates, taking a log-transformation of the sensors readings. Also, we do not consider readings from the second sensor, and so there are $p = 14$ covariates in this example. Moreover, the first $20,000$ data points are excluded, and so the full data used contains $n=4,188,261$ data points.

We select $k=140$ data points, in order to compare our approach with the approaches of IBOSS and OSS based on the generated convex hull. As mentioned in Section \ref{introduction}, the convex hull of the selected subdata should be in some sense as close as possible to the one generated by the full data. Also, we consider $K=10$. Alg1 is executed with $5$ iterations, and VAlg1 is executed once. Figure \ref{chull} shows the generated convex hulls between the $9$th and the $3$rd sensor, as well as between the $14$th and the $8$th sensor, by different approaches. In both cases, the convex hull generated by the subdata selected by VAlg1 is the closest one to this generated by the full data. It seems that the second closest convex hull generated by the subdata selected by Alg1. The behavior for the other pairs of covariates is similar. However, when the covariates are strongly correlated, the hull is quite smaller and the improvement that we attain is smaller as well.

\begin{figure}[!thb]
\begin{center}
\includegraphics[width=1\textwidth]{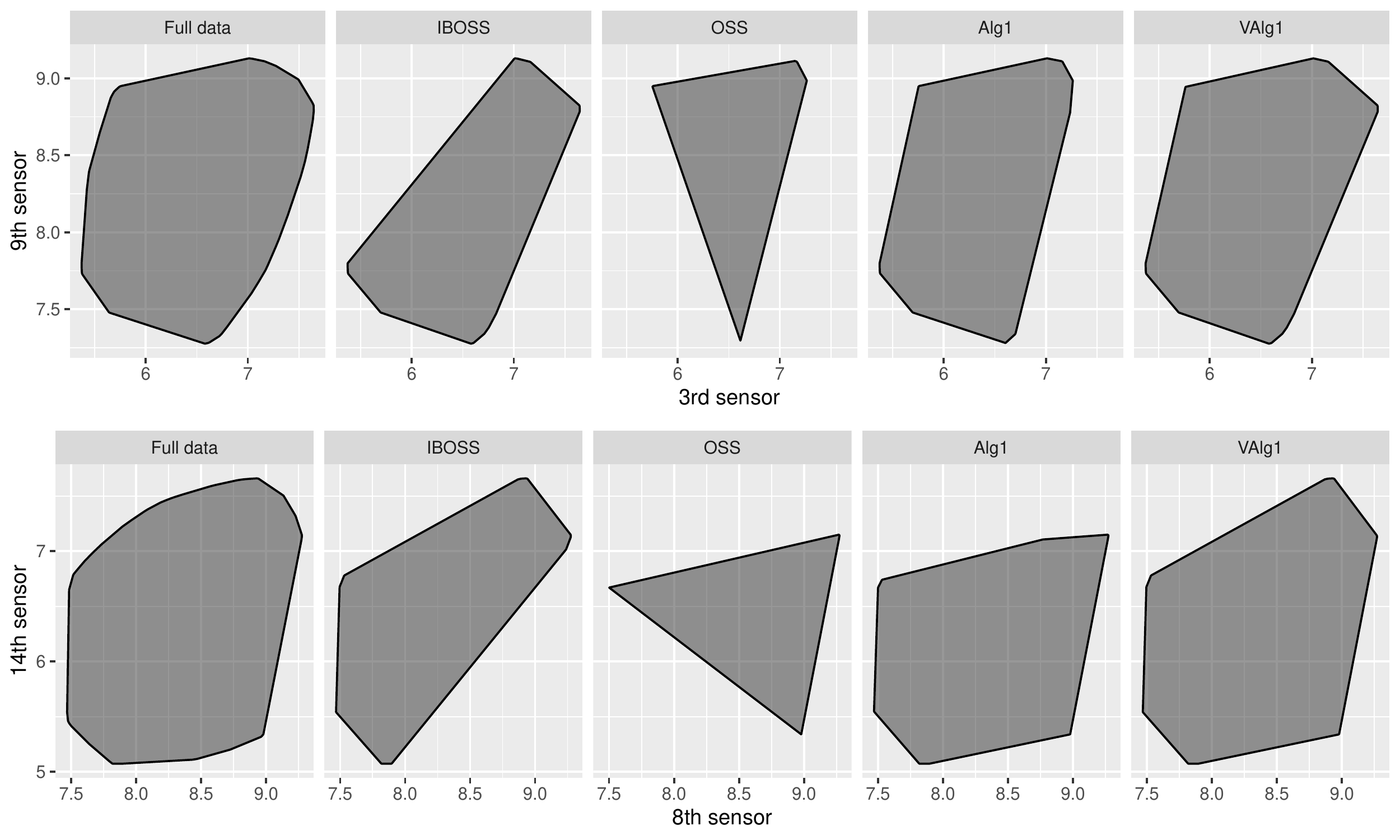}
\end{center}
\caption{The convex hulls between the $9$th and the $3$rd sensor, as well as between the $14$th and the $8$th sensor, for the subdata selected by different approaches, when the subdata seize is $k=140$ and the number of candidate data points is $K=10$. Alg1 is executed with $5$ iterations, and VAlg1 is executed once.}
\label{chull}
\end{figure}

Both Alg1 and VAlg1 started from the OSS approach. Note that the convex hull generated by the IBOSS approach is closer to this generated by the full data, compared with this generated by the OSS approach. However, our approach starts with the OSS approach, and both Alg1 and VAlg1 lead to generate convex hulls closer to this generated by the full data, compared with this generated by the IBOSS approach. This fact gives us an indication that our approach can improve another approach, i.e., the OSS approach, to a great extent, even if the latter seems to select less informative data points compared to another approach, i.e., the IBOSS approach. Similar results can occur if the IBOSS approach selects less informative data points compared to the OSS approach.

Figure \ref{chull} provides an insight into each approach in the case of a real big data example. Our approach dominates other approaches in terms of maximizing the volume generated by the selected subdata, even in the case of an extraordinary large dataset. Also, it is evident that the choice of a subsampling approach is particularly important in cases of real big data, since each approach poses its own risks. Figure \ref{chull} is very revealing about the way the algorithms work. The algorithm of the IBOSS approach selects extreme data points from both covariates, and so it ignores the shape of the data to a great extent. In the plotted case the shape is quite large and the algorithm of the IBOSS approach fails to account for that. The algorithm of the OSS approach selects data points in the corners of a rectangular and because of the data it fails also to see the shape. Our proposed algorithms started from the data points selected by the OSS approach exchange data points in order to increase the convex hull, and hence they end up with a shape closer to the true one of the full data. This way the shaded area is quite large and comes closer to the one of the full data.

\section{Concluding remarks}
\label{concluding_remarks}
We have presented algorithms to select data points in an optimal way from a big dataset so as to be able to run regression and derive coefficients that share as much information as possible. The newly developed algorithms were compared with existing ones to show the kind of improvement that we can take back. 

The theoretical results in Section \ref{theor} indicate the working direction of \cite{wang2019information}, \cite{wang2021oss} and \cite{ren&zhao}. However, as \cite{wang2021oss} stated, finding subdata that exactly approach an OA may be impossible in many cases. Theorem \ref{theorem} shows that
an orthogonal array provides the best result but in practice this is not feasible with real data. The results in Sections \ref{section_simulation} and \ref{section_data} show that neither the selection of only extreme data points nor the approach of an OA can always lead to the most informative subdata, even if the corresponding approaches could be preferred because they are fast. Therefore, our approach is generally based on the maximization of the generalized variance, without being interested in directly obtaining specific data points according to Theorem \ref{theorem}. However, Theorem \ref{theorem} implies that searching towards the maximization is a good strategy for practical purposes.  
 
 Also, \cite{wang2021oss} discussed that the OSS approach could be improved by a greedy modification of their algorithm, that is to interchange data points between the selected subdata and the remaining data points based on the minimization of a discrepancy function. Our approach is an updated version of what \cite{wang2021oss} discussed, since we obtain some candidate data points in advance, and so a very time-consuming algorithm is avoided. Moreover, two features of our approach that can be modified are the number of iterations of Alg1 and which data points are interchanged. At this point, it is important to discuss the value of $K$ which is an even number of data points selected for each covariates. Initially, it is difficult to determine the value of $K$, since the establishment of a general rule for its determination is not feasible. A larger value of $K$ results in a greater number of data points in \textbf{F}. In this scenario, both Alg1 and VAlg1 may select more informative subdata, since the pool of candidate data points for the final subdata is larger. However, it is worth mentioning that a larger value of $K$ can make both Alg1 and Valg1 extremely time-consuming. Consequently, one may consider a trade-off between the improvement of the final subdata and the execution times of both Alg1 and VAlg1. An optimal value of $K$ is difficult to be determined in advance. Therefore, we recommend a moderate value of $K$, typically ranging between $10$ and $25$, ensuring that the maximum number of data points in \textbf{F} lies between $10p$ and $25p$.
 
 In fact, the algorithms of our approach are greedy extension to existing ones and hence one must consider a trade-off between improving the selected data points at the cost of some additional effort time needed. In practice, we suggest that starting with the OSS approach can be a secure choice, as it has generally demonstrated better performance than the IBOSS approach. While here we pursue D-optimality, the presented approach also satisfies good properties with respect to A-optimality. Additionally, extensions to other models are straightforward, as for example GLM type of models.

\bibliographystyle{plainnat}
\bibliography{main}

\end{document}